\title{Efficient Near-Optimal Codes for General Repeat Channels}
\author{Francisco Pernice\thanks{Departments of Mathematics and Computer Science, Stanford University. fpernice@stanford.edu. Research supported by CURIS 2021.} , Ray Li\thanks{Department of Computer Science, Stanford University. rayyli@cs.stanford.edu. Research supported by NSF Grants DGE-1656518, CCF-1814629} , Mary Wootters\thanks{Departments of Computer Science and Electrical Engineering, Stanford University.  marykw@stanford.edu.  Research partially supported by NSF Grant CCF-1844628 and by a Sloan Research Fellowship.}}
\date{\today}
\newcommand{\E}{\mathbb{E}}
\DeclareMathOperator{\BDC}{\mathrm{BDC}}
\DeclareMathOperator{\TDC}{\mathrm{TDC}}
\DeclareMathOperator{\TRIM}{\mathrm{TRIM}}
\DeclareMathOperator{\RC}{\mathrm{RC}}
\DeclareMathOperator{\TRC}{\mathrm{TRC}}
\DeclareMathOperator{\DC}{\mathrm{DC}}
\DeclareMathOperator{\PRC}{\mathrm{PRC}}
\newcommand{\eps}{\varepsilon}
\newcommand{\CC}{\mathcal{C}}
\newcommand{\DD}{\mathcal{D}}
\newcommand{\N}{\mathbb{N}}
\newcommand{\1}{\mathbbm{1}}
\newcommand{\pr}{\mathbb{P}}
\newcommand{\TT}{\mathcal{T}}
\newcommand{\II}{\mathcal{I}}
\newcommand{\R}{\mathbb{R}}
\newcommand{\til}[1]{\widetilde{#1}}
\newcommand{\hatt}[1]{\widehat{#1}}
\newcommand{\eqdist}{\stackrel{\DD}{=}}
\DeclareMathOperator{\Ch}{\mathsf{Ch}}
\DeclareMathOperator{\Enc}{\mathsf{Enc}}
\DeclareMathOperator{\poly}{\mathsf{poly}}
\DeclareMathOperator{\Dec}{\mathsf{Dec}}
\theoremstyle{definition}
\newtheorem{thm}{Theorem}
\newtheorem{lem}[thm]{Lemma}
\newtheorem{claim}[thm]{Claim}
\newtheorem{prop}[thm]{Proposition}
\newtheorem{defn}[thm]{Definition}
\numberwithin{thm}{section}
\begin{document}

\maketitle

\begin{abstract}
Given a probability distribution $\DD$ over the non-negative integers, a \emph{$\DD$-repeat channel} acts on an input symbol by repeating it a number of times distributed as $\DD$. For example, the binary deletion channel ($\DD=Bernoulli$) and the Poisson repeat channel ($\DD=Poisson$) are special cases. We say a $\DD$-repeat channel is \emph{square-integrable} if $\DD$ has finite first and second moments. In this paper, we construct explicit codes for all square-integrable $\DD$-repeat channels with rate arbitrarily close to the capacity, that are encodable and decodable in linear and quasi-linear time, respectively. We also consider possible extensions to the repeat channel model, and illustrate how our construction can be extended to an even broader class of channels capturing insertions, deletions, and substitutions.

Our work offers an alternative, simplified, and more general construction to the recent work of Rubinstein \cite{rubin}, who attains similar results to ours in the cases of the deletion channel and the Poisson repeat channel. It also slightly improves the runtime and decoding failure probability of the polar codes constructions of Tal et al. \cite{polar} and of Pfister and Tal \cite{polar2} for the deletion channel and certain insertion/deletion/substitution channels. Our techniques follow closely the approaches of Guruswami and Li \cite{explicit-codes1} and Con and Shpilka \cite{con-shpilka}; what sets apart our work is
that to obtain our result, we show that a capacity-achieving code for the channels in question can be assumed to have an ``approximate balance'' in the frequency of zeros and ones of all sufficiently long substrings of all codewords. This allows us to attain near-capacity-achieving codes
in a
general setting. We consider this ``approximate balance'' result to be of independent interest, as it can be cast in much greater generality than just repeat channels.

\end{abstract}

\section{Introduction}
Fixing a probability distribution $\DD$ over the natural numbers $\N,$ a $\DD$-repeat channel acts on an input bit by repeating it a number of times distributed like $\DD.$ Special cases include the binary deletion channel, Poisson repeat channel, and the sticky channels (the latter two were introduced by Mitzenmacher et al \cite{simple-lower-bound-1/9, sticky-channels}). 
We say a $\DD$-repeat channel is \emph{square-integrable} if $\DD$ has finite first and second moments.
In general, the output of a $\DD$-repeat channel has random length, and does not preserve synchronization; that is, one cannot see the index at the input of a given observed bit at the output. This introduces \emph{memory} into the channel, making its analysis much more complicated than its memoryless counterparts. 
For example, in stark contrast with the memoryless case, even in the simplest case of the binary deletion channel (where $\DD=Bernoulli(p)$), the capacity is unknown, although several lower and upper bounds have been proved (see \cite{mahdi-review, mit-review} for two excellent surveys on synchronization channels). 

More recently, progress has been made on constructing \emph{explicit} and \emph{efficient} codes whose rates approximate the state of the art lower bounds on capacity for certain simple special cases of repeat channels. Guruswami and Li \cite{explicit-codes1} gave the first explicit and efficient codes for the deletion channel with $\Theta(1-d)$ rate, achieving a rate of $(1-d)/120.$ This was later improved by Con and Shpilka \cite{con-shpilka} to $(1-d)/16.$ Tal et al. \cite{polar} gave a construction using polar codes proved to achieve the capacity of the deletion channel by considering a sequence of hidden-markov input processes that approach the maximum mutual information. In \cite{polar2}, their construction was extended to a more general model of synchronization errors, which allows for simple insertions and bit flips.  Very recently, Rubinstein \cite{rubin} gave a black-box construction, which takes a general (inefficient and non-explicit) code for the deletion channel or the Poisson repeat channel of a given rate $R$ and produces an efficient and explicit code of rate $R-\eps$, for any $\eps>0.$ In particular, this yields an efficient and explicit code achieving capacity on these channels.  However, to our knowledge, no efficient and explicit code construction has been given of even non-trivial rate for general square-integrable repeat channels.

In this paper, we show that, by extending the techniques of \cite{explicit-codes1, con-shpilka}, we can obtain codes for \emph{any} square-integrable repeat channel that are efficiently encodable and decodable and of rate within $\eps$ of the capacity, for any $\eps>0.$ We also illustrate how our construction can give explicit, efficient capacity achieving codes for an even broader class of channels capturing insertions, deletions, and substitutions.

As mentioned above, similar results appeared in the literature before, and our result differs in the following ways.  First, the work \cite{rubin} proves the same result for the deletion channel and the Poisson repeat channel. Our construction generalizes the result to general repeat channels, and we illustrate how it can be generalized further to channels capturing insertions, deletions, and substitutions. We also believe our proof is simpler. Second, the works \cite{polar,polar2} obtain similar results for the deletion channel \cite{polar} and insertion/deletion/substitution channels \cite{polar2}. Compared to these works, we give slightly faster decoding algorithms and slightly smaller error probability: for any $0<\nu'<\nu<1/3$, \cite{polar,polar2} give decoding error probability $e^{-\Omega(n^{\nu'})}$ in time $O(n^{1+3\nu})$, while we achieve decoding error probability $e^{-\Omega(n)}$ in time $O(n\poly\log n)$.

The simplicity and generality of our construction arises from the observation that, in a broad class of channels, there exist capacity achieving codes all of whose codewords are ``locally approximately balanced" in zeros and ones (Proposition~\ref{zeros-distribution-lemma}).
This balance property allows us to directly use code concatenation, in contrast to \cite{rubin} which needs recursive concatenation.
Our construction concatenates one such ``locally approximately balanced" constant length inner codes with an outer code correcting worst-case insertions and deletions.

\subsection{Organization}
In Section~\ref{preliminaries} we review some background material needed for the proofs. In Section \ref{sec:main-result} we give the construction, compare it with the recent work of \cite{rubin}, and prove its correctness. Finally in Section \ref{extensions} we give a brief overview of how our results can be extended to a more general error model allowing for insertion and substitution errors.

\section{Preliminaries}\label{preliminaries}
\subsection{Notation}
In what follows, $\{0,1\}^n$ for $n \in \N \cup\{\infty\}$ ($\N = \{0,1,2,\dots\}$) denotes the set of bit strings of length $n$; we also let $\{0,1\}^* = \bigcup_{n \in \N} \{0,1\}^n.$ For $x\in \{0,1\}^n$, we let $x_j^k$ denote the substring of $x$ starting at index $j$ and ending at $k,$ inclusive, and unless specified otherwise, we let $x_i := x_i^i$. For $n\in \N,$ we let $[n] = \{1,2,\dots, n\};$ even if $n\in \R_+,$ we let $[n]:=[\lfloor n\rfloor].$  For two strings $x,y\in \{0,1\}^*,$ we let $xy$ denote their concatenation, and for $k\in \N$, $(x)^k$ denotes the $k$-wise concatenation of $x$ with itself; we let $(x)^0$ be the empty string. For $x\in \{0,1\}^n,$ $|x|=n$ denotes the length of $x.$ We denote the capacity of an arbitrary channel $\Ch$ (which we define below) by $Cap(\Ch).$ All logs (hence entropies, etc.) in this paper are base 2. For a probability distribution $\DD$ over $\R,$ we let $\mu(\DD)$ denote the expectation; whenever $\DD$ is understood from context we sometimes just write $\mu$. Similarly we let $\sigma^2(\DD)$ denote the variance, for which we sometimes just write $\sigma^2$.
Throughout the paper, by ``quasi-linear time'' we mean time $O(n\poly(\log n))$. 
\subsection{Basic Concepts}

For completeness, we give a definition of a general binary communication channel, introducing the notation we will use in this paper.
\begin{defn}
For $\Omega$ a probability space, a binary communication channel is a map $\Ch:\Omega\times \{0,1\}^*\to\{0,1\}^*.$ For $x\in \{0,1\}^*,$ we write $\Ch x$ to denote the random variable $\omega\mapsto \Ch(\omega, x).$
\end{defn}

In this paper we deal specifically with \emph{square-integrable binary repeat channels}, which we define next.
\begin{defn}\label{D-repeat-channels}
For a probability distribution $\DD$ over $\N$, let $\Omega = \N^\infty$ (the infinite product space), with a $\DD^\infty$ measure (the infinite product measure). The \emph{binary $\DD$-repeat channel} is defined as $\RC_\DD(\omega, x) = (x_1)^{\omega_1}(x_2)^{\omega_2}\dots (x_n)^{\omega_n}$ for an input $x\in \{0,1\}^n.$ We say $\RC_\DD$ is \emph{square-integrable} if $\mu(\DD)<\infty$ and $\sigma^2(\DD)<\infty.$ 
\end{defn}

That is, each bit sent through the $\RC_\DD$ gets repeated $R\sim \DD$ times. 
 We note that it is well-defined to speak of the index at the input that gave rise to a given bit at the output: the origin bit of the $j$th bit at the output is the $\min\{i\geq 1:\sum_{k=1}^i \omega_k \geq j\}$'th bit at the input.

We now define error correcting codes and channel capacity.
\begin{defn}
A \emph{code} is a family $\CC = \{\CC_n\}_{n\geq 1}$ of subsets $\CC_n\subseteq \{0,1\}^n,$ and its \emph{rate} is defined as $R_n = \frac{1}{n}\log |\CC_n|$; we also sometimes refer to $R=R(\CC) = \lim_{n\to\infty} R_n$ as the rate. An \emph{encoding algorithm} for $\CC$ is a family of maps $\Enc:\{0,1\}^{nR_n }\to \CC_n$ and a \emph{decoding algorithm} is a family of maps $\Dec:\{0,1\}^* \to \{0,1\}^{nR_n }$ (we drop $n$ from the notation). We say that the code $\CC$ is \emph{sound} with respect to a channel $\Ch$ if there exist encoding and decoding algorithms $\Enc$ and $\Dec,$ respectively, for $\CC$ such that
\[
\pr(\Dec(\Ch \Enc(X)) \neq X)\to 0
\]
as $n\to\infty,$ where $X$ is taken uniformly from $\{0,1\}^{nR_n}.$
\end{defn}
\begin{defn}
Fix a channel $\Ch$ and let $\mathcal{S}$ be the collection of all codes that are sound with respect to $\Ch.$ The \emph{capacity} of $\Ch$ is defined as 
\[
Cap(\Ch) = \sup_{\CC\in \mathcal{S}} R(\CC).
\]
\end{defn}

Finally we define the \emph{trimming repeat channels}, which unlike the objects defined above are non-standard, but which will be an important part of our construction. We note that ``trimming versions'' of synchronization channels appear in \cite{polar, polar2} and that there they play a role similar to the one in our construction.
\begin{defn}\label{trim-bdc}
Let $\TRIM$ be a (deterministic) channel which acts on $x \in \{0,1\}^n$ by deleting the longest possible substrings at the beginning and end of $x$ consisting entirely of zeros. 

Specifically, $$\TRIM x = x_{\min\{i\in [n]: x_i = 1\}}^{\max\{i\in [n]: x_i = 1\}}$$ (or the empty string if $x$ is all zeros). 
Let $\RC_\DD$ be as in Definition \ref{D-repeat-channels}. We then define the \emph{trimming $\DD$-repeat channel} by the composition $\TRC_\DD := \TRIM \circ \RC_\DD.$
\end{defn}

\subsection{Generalized Shannon's Theorem}
For the square-integrable $\DD$-repeat channels, as well as a wide class of other synchronization channels, Dobrushin \cite{generalized-shannons-thm} showed that the capacity is given by a certain limit of the finite-length message maximum mutual information between input and output; this extended the fundamental result of Shannon \cite{shannon} for memoryless channels. Here we state his theorem in our setting and notation. We refer the reader to the excellent survey of Cheraghchi and Ribeiro \cite{mahdi-review} for an illuminating discussion. Before the theorem we give a simple (non-general) definition of a stationary ergodic process, which will be important in our proof.
\begin{defn}
A stochastic process $\{X_j\}_{j\geq 1}$ is \emph{stationary} if for every $j,N\in \N$ we have $(X_1,\dots,X_N)\eqdist (X_{j+1},\dots,X_{j+N})$, where $\eqdist$ denotes equality in distribution. Moreover, the process $\{X_j\}_{j\geq 1}$ is \emph{stationary ergodic} if it is stationary and it satisfies Birkhoff's Pointwise Ergodic Theorem, i.e. for every $f\in L^1$ we almost surely have
\[
\E f(X_1) = \lim_{n\to\infty}\frac{1}{n}\sum_{j=1}^n f(X_j).
\]
\end{defn}
\begin{thm}[Generalized Shannon's Theorem \cite{generalized-shannons-thm}]\label{shannon-thm}
Consider a channel $\Ch$ that acts independently on each input bit, and appends the corresponding outputs, i.e. we have $\Ch x \eqdist (\Ch x_1)(\Ch x_2)\dots(\Ch x_n)$ for $x\in \{0,1\}^n.$ Suppose further that $\E |\Ch b| < \infty$ for $b\in \{0,1\}$, i.e. the channel output has finite expected length for each input bit. Then the capacity is given by
\[
Cap(\Ch) = \lim_{n\to\infty} \frac{1}{n} \sup_{X^n} I(X^n;Y^n),
\]
where the sup is taken over all random variables $X^n$ supported on $\{0,1\}^n$ and $Y^n = \Ch X^n$. Moreover, the capacity is achieved by a stationary ergodic input process.
\end{thm}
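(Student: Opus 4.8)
The plan is to follow the classical template for coding theorems on channels with memory — this is essentially Dobrushin's argument — in three movements: show the limit on the right-hand side exists, prove the converse by Fano's inequality, and prove achievability by a random-coding argument anchored by an information-stability (Shannon--McMillan--Breiman type) statement for the joint input--output process.

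\textbf{Existence of the limit and the converse.} Write $C_n = \sup_{X^n} I(X^n;Y^n)$. First I would show $(C_n)$ is \emph{almost superadditive}. Given near-optimal inputs $X^m$ and $\tilde X^n$, feed the independent concatenation $(X^m,\tilde X^n)$ through $\Ch$; since $\Ch$ acts bitwise and appends outputs, conditioning on the \emph{scalar} split point $L=|\Ch X^m|$ recovers the two output blocks, so the chain rule and the independence structure give $I\big((X^m,\tilde X^n);Y^{m+n}\big) \ge I(X^m;Y^m)+I(\tilde X^n;\tilde Y^n) - H(L)$, where $H(L)=O(\log m)$ because $L$ is an $\N$-valued random variable with mean $O(m)$ (here is where $\E|\Ch b|<\infty$ enters). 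Hence $C_{m+n}\ge C_m+C_n-O(\log m)$, and a Fekete-type argument yields that $C := \lim_n C_n/n$ exists. The converse $Cap(\Ch)\le C$ is then routine: for any sound code take $M$ uniform on the codebook $\CC_n$, set $X=M$ (the codeword), $Y=\Ch X$, $\hat M=\Dec(Y)$; Fano gives $H(M\mid Y)\le 1+P_e\log|\CC_n|$, so $nR_n = I(X;Y)+H(M\mid Y)\le C_n+1+P_e\, nR_n$, and letting $n\to\infty$ with $P_e\to 0$ forces $R\le C$.

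\textbf{Achievability and the stationary ergodic claim.} Fix $R<C$ and $\delta>0$ with $R+2\delta<C$, choose $\ell$ with $C_\ell/\ell > R+2\delta$, and let $P^*$ on $\{0,1\}^\ell$ be near-optimal for $C_\ell$. Let $\mathbf X=\{X_j\}$ be obtained by concatenating i.i.d.\ length-$\ell$ blocks drawn from $P^*$; a standard argument (shift-averaging together with ergodic decomposition, or passing to an optimal hidden-Markov input) replaces this by a genuinely stationary ergodic process of the same information rate, and then the pair $(\mathbf X,\mathbf Y)$ with $\mathbf Y$ its channel output is stationary and ergodic because $\Ch$ is an independent-per-coordinate factor of $\mathbf X$ and the i.i.d.\ channel noise. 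The crux is the information-stability statement
\[
\frac1n\log\frac{\pr(Y^n\mid X^n)}{\pr(Y^n)} \longrightarrow I_\infty \quad\text{a.s.\ and in }L^1,
\]
where $I_\infty = \lim_n \tfrac1n I(X^n;Y^n) \ge C_\ell/\ell - o(1) > R+\delta$; this is a Shannon--McMillan--Breiman statement for the (variable-output-length, renewal-structured) stationary ergodic process $(\mathbf X,\mathbf Y)$, and $\E|\Ch b|<\infty$ is precisely what makes the per-bit information densities integrable and makes the output length obey a law of large numbers ($n$ input bits produce $\Theta(n)$ output bits). Granting this, a textbook threshold-decoding random-coding argument finishes: draw $2^{nR}$ codewords i.i.d.\ from the length-$n$ process and decode $y$ to the unique $m$ with $\tfrac1n\log\frac{\pr(y\mid c(m))}{\pr(y)}>R+\tfrac\delta2$; the true codeword clears the threshold with probability $\to1$ by information stability, while any fixed independent codeword clears it with probability $\le 2^{-n(R+\delta/2)}$ by a change-of-measure bound, so a union bound over wrong messages gives total error $e^{-\Omega(n)}$. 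Hence a sound code of rate $R$ exists for every $R<C$, realized by a stationary ergodic input process, and with the converse this gives $Cap(\Ch)=C$.

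\textbf{Main obstacle.} The delicate step is the information-stability convergence for a channel that is neither memoryless nor length-preserving: one must first package the joint input--output behaviour as a bona fide stationary ergodic process despite the unobservable block boundaries, then obtain almost-sure convergence of the normalized information density by applying Birkhoff's pointwise ergodic theorem to the correct additive functional, using $\E|\Ch b|<\infty$ for the $L^1$ control. A secondary nuisance — the failure of exact superadditivity of $C_n$ — has the same root cause (the receiver cannot see where one sub-block's output ends), and is absorbed by the $O(\log n)$ ``split-point'' cost above.
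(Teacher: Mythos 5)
This statement is not proved in the paper at all: it is Dobrushin's theorem, imported verbatim from \cite{generalized-shannons-thm} as a black box (the paper only ever \emph{uses} it, in Lemma~\ref{trim-bdc-cap-lemma} and Proposition~\ref{zeros-distribution-lemma}). So there is no in-paper proof to compare against; I can only assess your sketch against the standard argument in the literature, whose structure you have in fact reproduced faithfully: almost-superadditivity of $C_n$ via the $O(\log m)$ ``split-point'' entropy (this is the right way to use $\E|\Ch b|<\infty$, since an $\N$-valued variable of mean $O(m)$ has entropy $O(\log m)$), Fano for the converse, and random coding with threshold decoding for achievability. Those three pieces are correct as written.

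The genuine gap is exactly where you flag it, and flagging it does not discharge it: the information-stability statement $\frac1n\log\frac{\pr(Y^n\mid X^n)}{\pr(Y^n)}\to I_\infty$ a.s.\ and in $L^1$ is the entire content of Dobrushin's paper, not a corollary of Birkhoff applied to ``the correct additive functional.'' The difficulty is that $\pr(Y^n\mid X^n)$ is a sum over all parsings of the output string into $n$ per-bit blocks (the receiver does not see block boundaries), so the log-likelihood ratio is not an additive functional of any finite-dimensional stationary process in an obvious way; constructing the right extended process and proving the ergodic theorem applies to it is where the real work lies, and your sketch asserts the conclusion rather than deriving it. A secondary, smaller gap: your achievability argument produces, for each $R<C$, a stationary ergodic input with information rate exceeding $R$ (the phase-randomized i.i.d.-block process, which is indeed stationary and ergodic, and whose rate differs from the block rate by $o(n)$), but the theorem as stated --- and as used in Proposition~\ref{zeros-distribution-lemma} --- asserts a single stationary ergodic process attaining $\lim_n \frac1n I(X^n;Y^n)=C$ exactly; closing that requires an additional limiting/compactness step you do not supply. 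As a roadmap your proposal is the right one; as a proof it is incomplete at its central step.
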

We remark that Theorem \ref{shannon-thm} evidently applies to square-integrable repeat channels. Whenever it's understood from context, we will drop the parameter $n$ and just write $X$ for a random variable supported on $\{0,1\}^n,$ and let $Y = \Ch X.$ 
For channels for which the Generalized Shannon's Theorem doesn't necessarily apply (like trimming $\DD$-repeat channels), we refer to the limit $\lim_{n\to\infty} \frac{1}{n} \sup_{X^n} I(X^n;Y^n)$ as the \emph{information rate} of the channel; hence Theorem \ref{shannon-thm} shows that for the channels to which it applies, the capacity and the information rate coincide.
We emphasize that the fact that the capacity is attained by a stationary ergodic process in Theorem \ref{shannon-thm} will be instrumental in our construction.

\subsection{Worst-case insertion/deletion codes}

We now state the result of Haeupler and Shahrasbi \cite{adversarial-codes} which will be our \emph{outer code}, as it was in the works of Guruswami and Li \cite{explicit-codes1} and Con and Shpilka \cite{con-shpilka}. We first need a definition.
\begin{defn}
For $x \in \{0,1\}^n$ and $b\in \{0,1\},$ an \emph{insertion} and a \emph{deletion} at some index $j\in [n]$ are the transformations $x\mapsto x_1^jbx_{j+1}^n$ and $x\mapsto x_1^{j-1}x_{j+1}^n$, respectively.
The \emph{edit distance} between two strings $x,y$ is the minimum number of insertions and/or deletions to convert $x$ into $y.$

\end{defn}
We can now state the result.
\begin{thm}[\cite{adversarial-codes}, \cite{adversarial-codes-improved}]\label{adversarial-codes-thm}
For every $\eps,\delta \in (0,1)$ there exists a family of codes $\CC_n$ of rate $1-\delta-\eps$ over an alphabet $\Sigma$ of size $O_\eps(1)$ that can (deterministically) correct insertion/deletion (worst-case) errors resulting in an edit distance at most $\delta n$. Moreover, the $\CC_n$ have encoding and decoding algorithms that run in linear and quasi-linear time, respectively.
\end{thm}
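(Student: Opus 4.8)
The plan is to prove Theorem~\ref{adversarial-codes-thm} using the \emph{synchronization string} framework of \cite{adversarial-codes}: the guiding principle is that a worst-case pattern of at most $\delta n$ insertions and deletions can be converted --- paying only an $O_\eps(1)$ blow-up in alphabet size and an additive $O(\eps)$ loss in the correctable fraction --- into a worst-case pattern of \emph{half-errors} (symbol substitutions and erasures), for which near-Singleton-optimal, linear-time-encodable and quasi-linear-time-decodable codes are classical. The final code is therefore a near-MDS code whose coordinates are ``indexed'' by a synchronization string, so that the decoder can first reconstruct the positions of the surviving symbols and then run an ordinary (erasure/substitution) decoder.

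In detail, fix $\eps,\delta\in(0,1)$ and put $\eps'=\Theta(\eps^2)$. First I would invoke the existence of an $\eps'$-\emph{synchronization string} $S=(S_1,\dots,S_n)$ over an alphabet $\Sigma_S$ with $|\Sigma_S|=O_\eps(1)$: by definition, for all $1\le i<j<k\le n$ the edit distance between $S_i^{j}$ and $S_{j+1}^{k}$ exceeds $(1-\eps')(k-i)$. Existence follows from a Lov\'asz Local Lemma argument, and \cite{adversarial-codes-improved} supplies a deterministic linear-time construction and an efficient ``global'' re-indexing procedure. Second, I would take a near-MDS code $C_0$ over an alphabet $\Sigma_0$ with $|\Sigma_0|=O_\eps(1)$, of rate $1-\delta-O(\eps)$, correcting a $(\delta+\Omega(\eps))$-fraction of half-errors, with linear-time encoding and quasi-linear-time decoding (e.g.\ an expander-based or folded algebraic-geometry construction). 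The codeword for a message $m$ is $\bigl((c_1,S_1),\dots,(c_n,S_n)\bigr)\in(\Sigma_0\times\Sigma_S)^n$, where $c$ is the $C_0$-encoding of $m$. Since $C_0$'s alphabet may be taken as large a constant as we please, we may arrange $\log|\Sigma_S|\le\eps\log|\Sigma_0|$, so that appending $S$ costs only a $(1-\eps)$ factor in rate; hence the rate is $1-\delta-\eps$ after rescaling $\eps$, the alphabet size is $O_\eps(1)$, and encoding is linear because $C_0$ encodes in linear time and $S$ is a fixed precomputed string.

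The heart of the argument is decoding. Given a received word $y$ arising from a codeword by at most $\delta n$ insertions and deletions, the decoder strips the $\Sigma_S$-coordinates, runs the synchronization-string re-indexing algorithm to assign each surviving symbol a guessed index in $[n]$ (indices receiving no symbol, or conflicting symbols, are declared erasures), and feeds the resulting length-$n$ word over $\Sigma_0\cup\{\bot\}$ to the decoder of $C_0$. The key lemma --- and the step I expect to be the main obstacle --- is that the $\eps'$-synchronization property forces this index assignment to be almost correct: the number of coordinates that are erased or mis-indexed is at most $n\delta+O(\sqrt{\eps'}\,n)=n\delta+O(\eps n)$, i.e.\ at most a $(\delta+O(\eps))$-fraction of half-errors. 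Proving this is the technical core of \cite{adversarial-codes}: one builds a monotone matching between transmitted and received indices, defines a ``relative suffix distance'' potential on prefixes of the (re-indexed) decoding, and argues that a long run of mis-indexed symbols would exhibit two disjoint substrings of $S$ whose edit distance is far below $(1-\eps')$ times their combined length, contradicting the synchronization property; the optimization over the lengths of these two substrings is what produces the $\sqrt{\eps'}$ loss, which is why one sets $\eps'=\Theta(\eps^2)$. Given this bound, and since $C_0$ was chosen to correct a $(\delta+\Omega(\eps))$-fraction of half-errors, its decoder recovers $m$; and the overall running time is quasi-linear, since the re-indexing algorithm of \cite{adversarial-codes-improved} runs in near-linear time and $C_0$ decodes in quasi-linear time.
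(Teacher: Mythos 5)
Theorem~\ref{adversarial-codes-thm} is not proved in this paper: it is quoted as a black box from \cite{adversarial-codes,adversarial-codes-improved}, so there is no internal proof to compare yours against. That said, your sketch is a faithful reconstruction of the actual argument in those references---attach an $\eps'$-synchronization string to a near-MDS, linear-time-encodable code over a constant-size alphabet, use the re-indexing (``global decoding'') procedure to convert $\delta n$ worst-case insertions and deletions into a $(\delta+O(\sqrt{\eps'}))$-fraction of half-errors, and set $\eps'=\Theta(\eps^2)$---and your rate, alphabet-size, and runtime accounting are all consistent with the cited results. The step you flag as the technical core (the suffix-distance/monotone-matching analysis bounding the number of mis-indexed positions) is precisely what this paper takes on faith from the citation, so deferring its details is appropriate here.
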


\section{Main Result}\label{sec:main-result}
Our main result is the existence of efficient near-optimal codes for square-integrable repeat channels with rates approaching capacity. When restricted to the binary deletion channel or the Possion repeat channel, our construction streamlines the approach of \cite{rubin}. Specifically, we prove the following:
\begin{thm}\label{main-thm}
Fix a square-integrable repeat channel $\RC_\DD$. For every $\eps>0$, there exists a sound code $\CC$ with rate $R$ for the $\RC_\DD$ with $R \geq Cap(\RC_\DD)-\eps$ and linear and quasi-linear time encoding and decoding algorithms, respectively.
Moreover, the decoder has probability of failure $e^{-\Omega(n)}$.
\end{thm}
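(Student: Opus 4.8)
The plan is to follow the concatenation scheme of Guruswami--Li and Con--Shpilka, but with the inner code chosen so that every codeword is ``locally approximately balanced'' in its zeros and ones (the property asserted in Proposition~\ref{zeros-distribution-lemma}). First I would fix a small constant $\eps'>0$ and, using Theorem~\ref{shannon-thm}, pick a finite block length $\ell$ and a distribution on $\{0,1\}^\ell$ achieving information rate within $\eps'$ of $Cap(\RC_\DD)$; the key point is that the capacity is attained by a \emph{stationary ergodic} process, which by Birkhoff's theorem forces the frequency of ones in long windows to concentrate around a fixed density. Invoking Proposition~\ref{zeros-distribution-lemma}, I extract a subcode $\CC^{\mathrm{in}}\subseteq\{0,1\}^\ell$ of rate $\geq Cap(\RC_\DD)-2\eps'$ such that, for a suitable threshold $w=\Theta(\log\ell)$, every codeword has the property that each of its substrings of length at least $w$ has a fraction of ones inside some fixed interval $[\rho-\gamma,\rho+\gamma]$. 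This local balance is what lets the decoder, after passing through $\RC_\DD$, recover the block boundaries approximately: because each inner block has a known length $\ell$ and a known density of ones, the number of output bits produced by each block concentrates around $\mu(\DD)\cdot\ell$ with fluctuations controlled by $\sigma^2(\DD)$, so a greedy/chunking parser segments the channel output into pieces each of which is close (in edit distance) to the image of one inner codeword.

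Next I would set up the outer code. Over the alphabet $\Sigma=\CC^{\mathrm{in}}$ (size $2^{\Theta(\ell)}=O_{\eps}(1)$), apply Theorem~\ref{adversarial-codes-thm} with parameters $\delta,\eps''$ chosen small relative to $\eps$: this gives a rate-$(1-\delta-\eps'')$ code correcting a $\delta$-fraction of worst-case insertions/deletions, with linear encoding and quasi-linear decoding. The overall encoder maps a message to an outer codeword in $\Sigma^N$ and replaces each symbol by the corresponding length-$\ell$ inner codeword, yielding a binary string of length $n=N\ell$; the rate is $(1-\delta-\eps'')(Cap(\RC_\DD)-2\eps')\geq Cap(\RC_\DD)-\eps$ for appropriate choices. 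Encoding is linear time since both layers are. For decoding, I would first run $\RC_\DD$'s output through a trimming/parsing step that cuts the received string into $N$ consecutive blocks, each of length as close as possible to the typical length $\mu(\DD)\ell$; then decode each block back to the ``nearest'' inner codeword (by brute force, $O(1)$ time per block since $\ell=O(1)$), producing a string in $\Sigma^N$ that differs from the true outer codeword in at most $\delta N$ positions \emph{in edit distance}; finally run the Haeupler--Shahrasbi decoder, which is quasi-linear.

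The main obstacle is the error analysis of the parsing step: I must show that with probability $1-e^{-\Omega(n)}$ the greedy segmentation misaligns only a $o(\delta)$-fraction of the $N$ blocks, so that the total edit distance fed to the outer decoder stays below $\delta N$. Here is where square-integrability and local balance do the work. For a single inner block, the output length is a sum of $\ell$ i.i.d.\ copies of $\DD$, so by a Chernoff/Bernstein bound (valid since $\DD$ has finite variance; one may truncate the heavy tail and handle the rare large repeats separately as extra deletions) it deviates from $\mu\ell$ by more than $\eps''\ell$ with probability $e^{-\Omega(\ell)}$. A union bound over $N$ blocks is not by itself enough, so instead I track the cumulative drift $S_k=\sum_{i\le k}(|\text{output of block }i|-\mu\ell)$: this is a mean-zero random walk with bounded-variance steps, and a maximal inequality gives that $\max_k|S_k|$ exceeds $t\sqrt{N}$ only with exponentially small probability in $t^2$; choosing $t=\Theta(\sqrt N)$ gives failure probability $e^{-\Omega(N)}=e^{-\Omega(n)}$ while keeping the drift $O(N)$ small enough that only an $O(\delta)$-fraction of boundaries are off by more than $w$. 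The local-balance property is precisely what ensures that a boundary that is off by at most $w$ symbols corrupts the corresponding inner block by only $O(w)=O(\log\ell)$ edits rather than a constant fraction of $\ell$ (a misread window of $O(w)$ bits has a controlled number of ones, so the ``nearest inner codeword'' step is not fooled). Summing, the edit distance to the true outer codeword is $O(\delta N + (\log\ell/\ell)N)\le \delta' N$ for the outer code's tolerance, and Theorem~\ref{adversarial-codes-thm} then recovers the message. All of this costs $O(n\poly\log n)$ time and fails with probability $e^{-\Omega(n)}$, as claimed.
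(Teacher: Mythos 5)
Your proposal follows the paper's high-level strategy (a ``locally balanced'' inner code extracted from the stationary ergodic capacity-achieving process, concatenated with the Haeupler--Shahrasbi outer code), but it omits the one ingredient that makes the decoder work: the all-zero buffers $0^b$ inserted between consecutive inner codewords. Without buffers, your ``greedy/chunking parser'' that cuts the received string at positions near multiples of $\mu\ell$ cannot maintain synchronization, and the drift analysis you sketch does not close the gap. The cumulative drift $S_k=\sum_{i\le k}(|\text{output of block }i|-\mu\ell)$ is a mean-zero random walk whose typical magnitude after $k$ blocks is $\Theta(\sigma\sqrt{k\ell})$, so $\max_k|S_k|=\Theta(\sqrt{n})$ with high probability. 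Since your tolerance is $w=\Theta(\log\ell)=O(1)$ while $\ell=O(1)$ as well, essentially every boundary beyond the first $O(1)$ blocks is misaligned not by $O(w)$ bits but by many multiples of the entire block length; the claim that ``only an $O(\delta)$-fraction of boundaries are off by more than $w$'' is false, and the maximal inequality at scale $t\sqrt{N}$ cannot rescue it because any drift exceeding a constant already ruins a boundary. The symbol string handed to the outer decoder would then be wrong in almost all positions, far beyond the $\delta$-fraction of edit errors that Theorem~\ref{adversarial-codes-thm} tolerates. This is exactly why the paper's construction inserts buffers and why it needs Proposition~\ref{zeros-distribution-lemma} (buffers are distinguishable from codewords because codewords are never too zero-dense) together with Lemma~\ref{trim-bdc-cap-lemma} (the inner code must survive the trimming of boundary zeros that buffer removal induces, so it is built for $\TRC_\DD$ rather than $\RC_\DD$): the buffers let the decoder re-anchor at every block boundary, so the synchronization error never accumulates and each error event costs only $O(1)$ edit distance at the outer level.

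A secondary issue: for a merely square-integrable $\DD$, the per-block output length does not admit a Chernoff/Bernstein bound; only Chebyshev is available, giving per-block deviation probability $O(1/\ell)$ rather than $e^{-\Omega(\ell)}$. This is harmless in the paper's argument because the $e^{-\Omega(n)}$ failure probability comes from a Chernoff bound on the \emph{binomial count} of bad blocks (each bad independently with probability at most $\delta/10$), not from exponential concentration within a single block; your write-up should route the exponential bound through that counting step rather than through per-block tails.
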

We organize the remaining of this section as follows. In Section \ref{construction} we give the construction, and comment on how it achieves the promised rate and runtime guarantees. Then in Section \ref{comparison-prior-work} we compare our construction to that of Rubinstein \cite{rubin}, explaining the differences. Finally in Section \ref{main-proof} we give the bulk of the proof of correctness of our construction.

\subsection{Construction}\label{construction}

We prove in Lemma \ref{trim-bdc-cap-lemma} that the information rates of the $\RC_\DD$ and $\TRC_\DD$ are the same. In Proposition~\ref{zeros-distribution-lemma}, we further show that we can assume the existance of a general (non-explicit and inefficient) code $\CC_{in}$ for the $\TRC_\DD$ such that each sufficiently long substring of each codeword in $\CC_{in}$ is approximately balanced in zeros and ones (see Proposition~\ref{zeros-distribution-lemma}), with rate $R \geq Cap(\RC_\DD)-\eps$, for any $\eps>0$.  This will be the \emph{inner code} in our construction, which we assume has (not necessarily efficient) encoding and decoding algorithms $\Enc_{in}$ and $\Dec_{in}$, respectively. Then, as in the work of Con and Shpilka \cite{con-shpilka}, for a codeword length $m$ to be fixed later, we take $2^m$ as the desired alphabet size for the \cite{adversarial-codes, adversarial-codes-improved} code (i.e. $|\Sigma| = 2^m$ in Theorem \ref{adversarial-codes-thm}), making sure to take $m$ large enough for the code of \cite{adversarial-codes, adversarial-codes-improved} to be effective.

Our encoding procedure $\Enc : \{0,1\}^{km}\to \{0,1\}^n$ for some $x\in \{0,1\}^{km}$ works as follows:
\begin{enumerate}
    \item We split $x$ into $x_1,\dots,x_k$, with $|x_j|=m,$ and we view each $x_j$ as a member of $\Sigma,$ hence $x \in \Sigma^k.$ We then use the encoder of Theorem \ref{adversarial-codes-thm} (call it $\Enc_{out}$) to encode $x.$ This yields $\til{x} = \Enc_{out}(x) \in \Sigma^{k/(1-\delta-\eps)}.$
    \item We again split $\til{x}$ into $\til{x}_1,\dots,\til{x}_{k'}$ where $\til{x}_j \in \Sigma, k' = k/(1-\delta-\eps),$ and view each $\til{x}_j$ as an element in $\{0,1\}^m.$ We then encode each $\til{x}_j$ with our inner code to produce $\hatt{x}_j = \Enc_{in}(\til{x}_j) \in \{0,1\}^{m/(R-\eps)},$ where, by taking $m$ large enough, we have made the rate of the inner code $R-\eps.$ We note that since $m = O(1),$ this encoding process is done in $O(1)$ time.
    \item Finally we concatenate the $\hatt{x}_j$ and put buffers of all zeros in between. Specifically, our final encoding of $x$ is 
    \[
    \Enc(x) = \hatt{x}_1 0^b \hatt{x}_2 0^b\dots 0^b \hatt{x}_{k'},
    \]
    where $b = b(m) = \eta m$ is a constant independent of $n= k'\cdot (\frac{m}{R-\eps} + b) = km/(Cap(\RC_\DD) - \psi(\eps,\delta,\eta, k,m))$ with $\psi\to 0$ as $\eps,\delta,\eta\to 0$ and $k,m\to\infty$, so by taking $\eps,\delta,\eta$ small enough and $m$ large enough, we can make the rate of the code get arbitrarily close to $Cap(\RC_\DD)$.
\end{enumerate}
We note that since $\Enc_{out}$ runs in linear time, so does our encoding $\Enc$. For the decoding $\Dec$ of a received string $y \in \{0,1\}^*$, we reverse the steps above:
\begin{enumerate}
    \item We identify the buffers of zeros by interpreting any maximal contiguous block of $\geq \frac{\mu}{2} \eta m $ zeros as a buffer. We remove the buffers, producing the received inner strings $y_1,\dots,y_\ell.$ 
    \item We decode each $y_j$ with our inner code to produce $\til{y}_j =\Dec_{in}(y_j)\in \{0,1\}^m$ for $j\leq \ell.$ 
    \item We interpret each $\til{y}_j$ as a letter in $\Sigma,$ and we decode the concatenation $\til{y}=\til{y}_1\dots\til{y}_\ell \in \Sigma^\ell$ with the outer code, to produce our final decoding of $y$:
    \[
    \Dec(y) = \Dec_{out}(\til{y}) \in \{0,1\}^{km}.
    \]
\end{enumerate}
We note that the identification of the buffers runs in linear time and $\Dec_{out}$ runs in quasi-linear time, hence our overall decoding $\Dec$ runs in quasi-linear time as well.

\subsection{Technical Comparison With \cite{rubin}}\label{comparison-prior-work}
As mentioned above, Theorem~\ref{main-thm} is similar to the result of Rubinstein in \cite{rubin}, and there are also similarities in the construction.  Thus, before we prove Theorem~\ref{main-thm}, we outline the ways in which our construction differs from the one in \cite{rubin}, and mention how this allows us to obtain more general results with a simpler proof.

\paragraph{Exploitation of Code Structure.}  In order for a construction based on concatenation and buffers to work, two hurdles need to be overcome: (1) the trimming that results from the separation of buffers from codewords may yield the decoding algorithm ineffective, and (2) distinguishing the buffers from the codewords in the first place might be difficult. We overcome these hurdles by showing that we may assume without loss of generality that the inner code we start with has some \emph{structure}; i.e. we can assume the code in fact codes for the $\TRC_\DD$ and has an approximately balanced distribution of zeros and ones, making the buffers easy to spot. By contrast, Rubinstein's construction \cite{rubin} is able to take any code for the binary deletion channel $\BDC_d$ or \emph{Poisson repeat channel} $\PRC_\lambda$ and use it unchanged as an inner code, and thus does not exploit any structural properties of the inner code. This is achieved in \cite{rubin} by a very careful construction of the buffers (outlined below), which allows one to correctly identify the buffers and remove them while trimming only a very small number of bits with high probability. Because so few bits are trimmed, a further careful analysis shows that these trimmed bits can only decrease the probability of error of a given decoding algorithm by a small amount, hence maintaining the asymptotic soundness property. Because these arguments are delicate, they require a longer and more complicated proof than the one we offer.

\paragraph{Generalization to Other Channels.} 
As our analysis relies on assuming without loss of generality a particular code structure, it is natural to ask how generally this structure can be assumed.  
As we show in the next section, the structure essentially follows from the generalized Shannon's Theorem \ref{shannon-thm}, and hence applies far beyond the $\BDC$ or $\PRC_\lambda.$ This allows us to easily extend our construction to the more general setting of square-integrable repeat channels. Further, as we outline in Section~\ref{sec:general}, our construction can be made to apply in an even more general setting (although beyond repeat channels, the proofs become more complicated). 
In contrast, extending the more delicate analysis of \cite{rubin} beyond the $\BDC$ and $\PRC_\lambda$ seems (to us) more challenging. 

\paragraph{Construction of Buffers.} In our construction, like the ones of Guruswami and Li \cite{explicit-codes1} and Con and Shpilka \cite{con-shpilka}, the buffers are simple long sequences of zeros that are identified by noticing their unusually high density of zeros as compared to codewords. Rubinstein on the other hand constructs buffers of the form $0^{b/2}1^{b/2}$, i.e. a sequence of all-zeros followed by a sequence of all-ones. The benefit of these ``valley'' buffers is that they have an easily identifiable center point at the receiver: the point of transition from zero to one. Hence, in a sequential decoding of the inner codewords, one can iteratively produce prior estimates of where the received buffer's center point should be, and iteratively align the estimate as long as it lands \emph{anywhere} in the buffer. Hence the error does not compound, and very high-accuracy identification of buffers is obtained.

\paragraph{Outer Code Choice.} Unlike our construction, the construction in \cite{rubin} is recursive: a given message to encode of length $m$ is divided into chunks of length $\sqrt{m},$ which are recursively encoded and then treated as the letters of the outer code. This recursive construction is able to preserve perfect synchronization with high probability. Hence, the outer code used is a Reed-Solomon code, which corrects replacement-type errors. This is in contrast with our construction and those of \cite{explicit-codes1, con-shpilka}, which need the outer code to correct a $\delta$-fraction of insertions/deletions.

\paragraph{Explicitness.} The construction of \cite{rubin} is slightly more explicit than ours in the following sense. In our construction, the only object that is not explicitly given is the inner code, which has $O(1)$ blocklength, so we can enumerate all possible codes of that blocklength until finding the desired one in $O(1)$ time. Notably, it is \emph{not} the case that this exhaustive search would be saved if one started out with an explicitly constructed but not necessarily efficiently encodable/decodable code for a $\RC_\DD$. Indeed, we further require this code to be robust to trimming errors and to have an approximately balanced distribution of zeros and ones, properties that need not be satisfied by general $\RC_\DD$ codes. By contrast, Rubinstein's construction allows for direct use of any code for the $\BDC$ or $\PRC_\lambda$ as the inner code, and thus is explicit in a stronger sense if the inner code one starts with is given explicitly.

\subsection{Proof of Correctness}\label{main-proof}
We organize the proof of Theorem \ref{main-thm} as follows. First we prove that the capacities of the repeat channels are unchanged if we trim off the zeros at the ends of the output. Second, we argue that we can assume there exist capacity-achieving codes with a sufficiently balanced distribution of zeros and ones in all its codewords. Finally, we put these results together into our proof of correctness of the construction given in Section \ref{construction}.

We begin with the first required result. 
\begin{lem}\label{trim-bdc-cap-lemma}
Let $\RC_\DD$ be a square-integrable repeat channel. Then the information rate for the $\TRC_\DD$ is $Cap(\RC_\DD).$
\end{lem}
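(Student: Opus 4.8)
The plan is to show that the information rates of $\RC_\DD$ and $\TRC_\DD$ coincide by a two-sided argument. One direction is essentially trivial: since $\TRC_\DD = \TRIM \circ \RC_\DD$ is a post-processing (data-processing) of $\RC_\DD$, for every input $X^n$ we have $I(X^n; \TRC_\DD X^n) \le I(X^n; \RC_\DD X^n)$, so taking suprema and the limit gives that the information rate of $\TRC_\DD$ is at most $Cap(\RC_\DD)$. The substantive direction is the reverse inequality: the trimming operation cannot destroy more than a negligible (i.e. $o(n)$) amount of mutual information per symbol.

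For the reverse direction, the key observation is that $\TRIM$ only deletes a prefix block of zeros and a suffix block of zeros from the output of $\RC_\DD$, and — crucially — one can recover $\RC_\DD X^n$ from $\TRC_\DD X^n$ together with a small amount of side information, namely the lengths $L_{\mathrm{pre}}$ and $L_{\mathrm{post}}$ of the two trimmed runs. Concretely, $I(X^n;\RC_\DD X^n) \le I(X^n; \TRC_\DD X^n, L_{\mathrm{pre}}, L_{\mathrm{post}}) \le I(X^n;\TRC_\DD X^n) + H(L_{\mathrm{pre}}) + H(L_{\mathrm{post}})$. So it suffices to bound $H(L_{\mathrm{pre}})$ and $H(L_{\mathrm{post}})$ by $o(n)$, in fact by $O(\log n)$ or even $O(1)$. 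Here square-integrability enters: the trimmed prefix length is controlled by $\omega_1 + \omega_2 + \cdots + \omega_j$ where $j$ is the index of the first $1$ in $X^n$; since each $\omega_i \sim \DD$ has finite mean, even conditioning on a worst-case long run of leading zeros of length up to $n$, the expected trimmed length is $O(n)$ but — and this is the point — its \emph{entropy} can be bounded. I would argue $L_{\mathrm{pre}}$ is stochastically dominated by a sum of at most $n$ i.i.d.\ copies of $\DD$ (in fact this is a gross overcount; one should condition on the codeword, but for an entropy upper bound we may be generous), and a nonnegative integer random variable with mean $M$ has entropy $O(\log M) = O(\log n)$ by the maximum-entropy bound for a fixed mean (the geometric distribution). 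Hence $H(L_{\mathrm{pre}}) + H(L_{\mathrm{post}}) = O(\log n) = o(n)$, and dividing by $n$ and taking limits yields the matching lower bound.

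The main obstacle I anticipate is making the side-information / reconstruction step fully rigorous: one must verify that $(\TRC_\DD X^n, L_{\mathrm{pre}}, L_{\mathrm{post}})$ genuinely determines $\RC_\DD X^n$ as random variables (it does, since prepending $L_{\mathrm{pre}}$ zeros and appending $L_{\mathrm{post}}$ zeros to the trimmed string reconstructs the untrimmed output verbatim — this uses that $\TRIM$ removes \emph{only} zeros and removes a contiguous prefix/suffix), and that the chain-rule / data-processing manipulations on mutual information are valid even though the output alphabet is $\{0,1\}^*$ of unbounded length. A clean way to handle the unbounded-length technicality is to note that $\E|\RC_\DD X^n| = \mu(\DD)\, n < \infty$, so all the relevant entropies are finite and the standard identities apply; alternatively one truncates to outputs of length $\le C n$ for large constant $C$, which carries all but an exponentially small fraction of the probability mass, and absorbs the truncation error into the $o(n)$ term. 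A secondary subtlety is that the information rate for $\TRC_\DD$ is defined via $\sup_{X^n} I(X^n;Y^n)$ without the Generalized Shannon's Theorem being available (as the excerpt notes), but this does not matter: the inequalities above hold for every fixed input distribution $X^n$, so they pass to the suprema and then to the limits on both sides without needing existence of an ergodic optimizer.
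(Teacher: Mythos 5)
Your proof is correct, and while it shares the paper's central idea --- that the trimmed output together with a logarithmically small amount of side information recovers the untrimmed output, so the two mutual informations differ by $o(n)$ --- your execution differs in two ways that make it cleaner. First, you split the claim into an easy data-processing direction ($I(X;\TRC_\DD X)\le I(X;\RC_\DD X)$ since $\TRIM$ is deterministic) and a hard direction, whereas the paper proves two approximate equalities by conditioning on the \emph{input} indices $\til L,\til R$ of the bits that land at the trim boundaries and then relating $I(X;Y\mid \til L,\til R)$ to both $I(X;Y)$ and $I(X;Y')$. Second, and more substantively, the paper needs the trim data $(L,R,|Y|)$ to live in a set of polynomial size, which forces it to first truncate $\DD$ to support bounded by $Bn$ (invoking Chebyshev, hence the second moment, plus the auxiliary Lemma~\ref{hp-equal-channels} to argue the truncation doesn't change the information rates); you avoid this entirely by bounding $H(L_{\mathrm{pre}})+H(L_{\mathrm{post}})$ via the maximum-entropy bound for nonnegative integer variables of a given mean, using only $\E L_{\mathrm{pre}}\le n\mu(\DD)<\infty$. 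So your argument needs only a finite first moment for this lemma and dispenses with the truncation machinery; the paper's version, on the other hand, sets up the conditioning on $\til L,\til R$ in a form that is reused almost verbatim for the Dobrushin-channel analogue (Lemma~\ref{trim-dobrushin-cap-lemma}), where the trimmed segments are no longer all-zero and one genuinely needs to bound $H(Y_1^L,Y_R^m)$ rather than just the lengths. Your handling of the countable-alphabet technicalities (all entropies appearing in the chain rule are finite, so the identities are valid) is adequate.
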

\begin{proof}
Let $X$ be supported on $\{0,1\}^n$ and $Y = \RC_\DD X.$ Let $L = \ell(Y),R=r(Y)$ be as in Definition \ref{trim-bdc} for the random string $Y:$ they are the (random) indices that mark the all-zero substrings that would be trimmed if $Y$ were passed through $\TRIM.$ Now let $\til{L},\til{R}$ be the indices of the bits in $X$ that, when $X$ is passed through the $\RC_\DD,$ end up at indices $L,R$ in $Y.$ We will prove the following claim.
\begin{claim}\label{cl}
Let $Y=\RC_\DD X$ and $Y' = \TRC_\DD X.$ 
Then
\begin{equation}\label{eq:cl1}|I(X;Y) - I(X;Y|\til{L},\til{R})| = o(n)\end{equation} 
and
\begin{equation}\label{eq:cl2} \lim_{n\to\infty} \frac{1}{n}\sup_X I(X;Y|\til{L},\til{R}) = \lim_{n\to\infty} \frac{1}{n}\sup_{X} I(X;Y').
\end{equation}

\end{claim}
By the generalized Shannon's Theorem, Claim~\ref{cl} proves the lemma. We will establish the claim by assuming that the support of $\DD$ is bounded, with a bound linear in the block length $n,$ i.e. there exists deterministic $B>0$ such that if $R\sim \DD,$ then $R\leq Bn$ with probability 1. This assumption is without loss of generality: if $\DD$ has unbounded support, we consider the ``truncation at $Bn$'' $\DD_n$ such that if $R\sim \DD, R'\sim \DD_n$ we have
\[
\pr(R' = k) = 
\begin{cases}
\pr(R = k) / (\sum_{\ell \leq Bn} \pr(R = \ell)) &\text{if }k\leq Bn\\
0&\text{otherwise}.
\end{cases}
\]
For each $x\in \{0,1\}^n$, we have $\TRC_{\DD_n}x\eqdist \TRC_\DD x$ and $\RC_{\DD_n}x\eqdist \RC_\DD x$ outside the set $\bigcup_{i=1}^n\{R_i>Bn\},$ where $R_i\sim \DD$ is the number of repetitions of $x_i$ when passed through $\TRC_\DD$ or $\RC_\DD$, which by Chebyshev's inequality and a union bound has probability $O(n^{-1})\to 0.$ In Lemma \ref{hp-equal-channels} in the appendix, we show that this implies that the information rates of $\TRC_{\DD_n}$ and $\TRC_\DD $ are the same, as are the information rates of $\RC_{\DD_n}$ and $ \RC_\DD.$\footnote{More precisely, we consider the of channel $\Ch$ which acts on $x\in \{0,1\}^n$ as $\Ch x = \RC_{\CC_n}x,$ (or  $\Ch' x = \TRC_{\DD_n} x$), where $\DD_n$ is the described truncation of $\DD.$ By the ``information rate'' of e.g. $\RC_{\DD_n}$ we mean the information rate of $\Ch$.}

Now assuming $\DD$ is bounded by $Bn>0$ as above, for \eqref{eq:cl1} in Claim~\ref{cl}, we have
\[
I(X;Y|\til{L},\til{R}) = H(X|\til{L},\til{R}) - H(X|Y,\til{L},\til{R}).
\]
We also have $H(X|\til{L},\til{R})\leq H(X)$, and by the chain rule,
\begin{align*}
    H(X|\til{L},\til{R}) &= H(X,\til{L},\til{R}) - H(\til{L},\til{R})\\
    &\geq H(X) - H(\til{L},\til{R}),
\end{align*}
so $H(X|\til{L},\til{R})- H(X)\leq 0$ and $H(X|\til{L},\til{R}) - H(X)\geq -H(\til{L},\til{R}),$ hence $|H(X) - H(X|\til{L},\til{R})|\leq  H(\til{L},\til{R})$ and by an identical derivation also $|H(X|Y)- H(X|Y,\til{L},\til{R})|\leq H(\til{L},\til{R}).$ Hence by the triangle inequality $|I(X;Y) - I(X;Y|\til{L},\til{R})| \leq 2H(\til{L},\til{R})\leq 4\log n = o(n)$ since $(\til{L},\til{R})$ is supported in $[n]^2,$ proving \eqref{eq:cl1}. For \eqref{eq:cl2}, we have
\begin{align*}
    I(X;Y|\til{L},\til{R}) &= I(X_1^{\til{L}}, X_{\til{L} +1}^{\til{R}-1} , X_{\til{R}}^n ;Y_1^L, Y_{L+1}^{R-1}, Y_R^n) \\
    &= I(X_1^{\til{L}}, X_{\til{L} +1}^{\til{R}-1} , X_{\til{R}}^n ;Y_{L+1}^{R-1}) + I(X_1^{\til{L}}, X_{\til{L} +1}^{\til{R}-1} , X_{\til{R}}^n ;Y_1^L, Y_R^n) \\
    &\leq I(X;Y_{L+1}^{R-1}|\til{L},\til{R}) + H(Y_1^L, Y_R^n) \\
    &\leq I(X;Y_{L+1}^{R-1}|\til{L},\til{R}) + H(L, R, |Y|) \\
    &= I(X;Y_{L+1}^{R-1}|\til{L},\til{R}) + o(n),
\end{align*}
where the penultimate inequality is because by definition, $Y_1^L$ and $Y_R^n$ are strings of all zeros, so they are uniquely specified if the length of $Y$ and the indices $L$ and $R$ are given, and the last inequality is because $(L, R, |Y|)$ is supported on $[Bn]^3$. Now by the same argument as in \eqref{eq:cl1}, we again obtain $|I(X;Y_{L+1}^{R-1}|\til{L},\til{R}) - I(X;Y_{L+1}^{R-1})|=o(n),$ and since $Y_{L+1}^{R-1}=\TRIM Y,$ we get $|I(X;Y|\til{L},\til{R}) - I(X;Y')| = o(n),$ where $Y' = \TRC_\DD X.$ This then gives 
\[
\lim_{n\to\infty} \frac{1}{n}\sup_X I(X;Y|\til{L},\til{R}) = \lim_{n\to\infty} \frac{1}{n}\sup_X I(X;Y'),
\]
where $Y=\RC_\DD X$ and $Y'=\TRC_\DD X$, proving \eqref{eq:cl2}.
This finishes the proof of Claim~\ref{cl} and hence of the lemma.

\end{proof}

Next, we show that we may assume an approximately balanced distribution of zeros and ones in all sufficiently long substrings of all codewords in an information-rate-achieving code. The following lemma, though simple, constitutes the substantial improvement in our argument as compared to those of \cite{con-shpilka} or \cite{explicit-codes1}. We remark that this result is much more general than just the setting of repeat channels that we are considering, and in particular applies to all channels to which the Generalized Shannon's Theorem \ref{shannon-thm} applies; nevertheless, for simplicity we state the lemma in the context relevant to our proof.

\begin{prop}\label{zeros-distribution-lemma}
Fix a square-integrable $\DD$-repeat channel $\Ch =\RC_\DD$, or the trimming version $\Ch = \TRC_\DD$, with information rate $\II$. For every $\zeta ,\eps \in (0,1)$ there exists $\gamma \in (0,\frac{1}{2})$ and a family of codes $\CC_n\subseteq \{0,1\}^n$ for $\Ch$ with rate $R\geq \II-\eps$ such that for every $c\in \CC_n$ and $i\in [n-\zeta n],$ we have $ \gamma \zeta n \leq w(x_{i}^{i + \zeta n}) \leq (1-\gamma)\zeta n,$ where $w:\{0,1\}^*\to \N$ is the Hamming weight (number of ones).
\end{prop}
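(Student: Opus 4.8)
The plan is to start from a capacity-achieving stationary ergodic input process, guaranteed by Theorem~\ref{shannon-thm} (applied to $\RC_\DD$; by Lemma~\ref{trim-bdc-cap-lemma} the information rate of $\TRC_\DD$ is the same, and a capacity-achieving process for $\RC_\DD$ will serve for both channels since the balance property we want is a property of the \emph{inputs}). Let $\{X_j\}_{j\ge 1}$ be such a process, with per-letter information rate at least $\II - \eps/2$. The key point is that $\{X_j\}$ is stationary ergodic, so by Birkhoff's pointwise ergodic theorem applied to the function $f = \mathbbm{1}[\,\cdot\, = 1]$ (and, more usefully, to the shifts via the standard trick of considering block sums), the empirical frequency of ones in a window of length $w = \zeta n$ converges almost surely, as $w\to\infty$, to $p := \pr(X_1 = 1)$. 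First I would dispose of the degenerate case: if $p \in \{0,1\}$ the process is a.s. constant and has information rate $0 < \II - \eps$, a contradiction (assuming $\II>0$; if $\II = 0$ the statement is vacuous since we may take the trivial code), so $p\in(0,1)$, and we set $\gamma$ to be anything smaller than $\min(p, 1-p)$, say $\gamma = \min(p,1-p)/2$.

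Next I would make the convergence quantitative enough to be useful at finite, but growing, window length. Because $\{X_j\}$ is stationary ergodic, for the fixed threshold $\gamma$ the event $E_i^{(n)} := \{\gamma\zeta n \le w(X_i^{i+\zeta n}) \le (1-\gamma)\zeta n\}$ has probability tending to $1$ as $n\to\infty$, uniformly in $i$ by stationarity; hence $\pr\big(\bigcap_{i\in[n-\zeta n]} E_i^{(n)}\big) \to 1$ by a union bound once we know $\pr((E_i^{(n)})^c) = o(1/n)$. Getting the $o(1/n)$ rate is the one place that needs care: plain ergodicity gives only $o(1)$. I would handle this by a mild strengthening — either (a) observe that the capacity-achieving process in Dobrushin's theorem can be taken to be a function of an i.i.d./Markov process with enough mixing that a Hoeffding- or Azuma-type bound gives exponentially small tail $\pr((E_i^{(n)})^c) \le e^{-\Omega(n)}$, or, more robustly, (b) avoid this by a truncation/blocking argument: pick a finite window length $W = W(\zeta,\gamma)$ large enough that a single i.i.d.-over-blocks surrogate process has ones-frequency in $[\tfrac12\gamma', 1-\tfrac12\gamma']$ in every length-$W$ block except on an event of tiny probability, and use that the true process and this surrogate agree closely. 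Either way, the upshot is: a random codeword drawn from (a length-$n$ truncation of) the capacity-achieving process satisfies the local-balance condition for \emph{all} windows simultaneously with probability $1 - o(1)$ (indeed $1 - e^{-\Omega(n)}$).

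Now the standard random-coding / expurgation step finishes it. The length-$n$ truncation $X^n = (X_1,\dots,X_n)$ of the stationary ergodic process has $H(X^n) = \sum \text{(conditional entropies)} \ge n(\II - \eps/2) - o(n)$ by the entropy-rate characterization, and $I(X^n; \Ch X^n) \ge n(\II - \eps/2) - o(n)$; by a standard argument (e.g.\ the achievability direction behind Theorem~\ref{shannon-thm}, or simply a second-moment/typicality argument on a code drawn i.i.d.\ from the distribution of $X^n$) there is a sound code $\CC_n'$ of rate $\ge \II - \eps$ whose codewords are drawn from (a set of high probability under) this distribution. Since the local-balance event has probability $1 - o(1)$, we may expurgate the codewords violating it, losing only a $o(1)$ fraction of codewords and hence negligibly in rate, to obtain $\CC_n \subseteq \CC_n'$ still of rate $\ge \II - \eps$, still sound (a subcode of a sound code is sound, with the inherited encoder/decoder), and now every codeword $c\in\CC_n$ satisfies $\gamma\zeta n \le w(c_i^{i+\zeta n}) \le (1-\gamma)\zeta n$ for all $i\in[n-\zeta n]$, as required.

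\textbf{Main obstacle.} The conceptual content is entirely in the ergodic-theorem step; the rest is bookkeeping. The honest difficulty is the quantitative rate: Birkhoff alone is qualitative, and to union-bound over the $\approx n$ windows I need $\pr((E_i^{(n)})^c)$ to beat $1/n$. I expect the cleanest fix is to not insist on the exact capacity-achieving process but to first pass to a finite-memory (e.g.\ $\ell$-th order Markov, or i.i.d.\ over length-$\ell$ blocks) process whose information rate is within $\eps/2$ of $\II$ — such a process exists because the $\limsup$ defining the information rate is approached by finite-length maximizers, which can be periodically/independently tiled to produce a stationary process with explicit mixing — and then $E_i^{(n)}$ is controlled by a McDiarmid/Azuma bounded-differences inequality for a sum of weakly dependent indicators, giving the needed $e^{-\Omega(n)}$ tail directly. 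This simultaneously yields the stronger ``for every codeword'' (not just ``in expectation'') conclusion and keeps the argument self-contained.
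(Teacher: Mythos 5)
The overall architecture of your argument (stationary ergodic capacity-achieving process from Theorem~\ref{shannon-thm}, Birkhoff to pin down the ones-frequency $P=\pr(X_1=1)\in(0,1)$, then sampling plus expurgation) matches the paper. But the step you flag as the ``honest difficulty'' --- needing $\pr\bigl((E_i^{(n)})^c\bigr)=o(1/n)$ to union-bound over all $\approx n$ windows --- is a gap you do not actually close, and neither of your proposed fixes is established. Fix (a) assumes the Dobrushin-optimal process can be taken to have mixing strong enough for a Hoeffding/Azuma bound; Dobrushin's theorem gives stationarity and ergodicity but no mixing rate, so this is an unproven (and unnecessary) strengthening. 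Fix (b), passing to a finite-memory surrogate within $\eps/2$ of the information rate, is plausible but is itself a nontrivial claim requiring proof, and you would additionally have to verify that the surrogate still codes for the channel at the claimed rate.

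The paper sidesteps the quantitative issue entirely with a containment trick you are missing: union-bound only over the $1/\zeta = O(1)$ \emph{disjoint aligned} blocks $X_{i\zeta n+1}^{(i+1)\zeta n}$, each distributed as $X_1^{\zeta n}$ by stationarity, so a per-block failure probability of $o(1)$ (which Birkhoff does give) suffices for all of them to be $\gamma$-balanced simultaneously with probability $1-o(1)$. Then observe that \emph{every} window $x_i^{i+3\zeta n}$ contains at least one full aligned block, whence its weight is at least $\gamma\zeta n = \tfrac{\gamma}{3}\cdot 3\zeta n$ and at most $3\zeta n - \gamma\zeta n \le (1-\tfrac{\gamma}{3})\cdot 3\zeta n$; relabeling $\til\zeta = 3\zeta$, $\til\gamma=\gamma/3$ gives the statement for all windows with no union bound over $n$ events at all. (This is also why the paper only claims the code exists with the property, with failure probability $o(1)$ absorbed by expurgation, rather than the $1-e^{-\Omega(n)}$ you aim for, which is not needed here.) Your proof would be complete if you replaced your step ``union bound once we know $\pr((E_i^{(n)})^c)=o(1/n)$'' with this observation.
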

\begin{proof}
The result follows from the fact that in the generalized Shannon's Theorem, we may assume that the process which achieves the information rate is stationary ergodic (see Theorem \ref{shannon-thm}). Even if we're dealing with the trimming version of such a channel, by Lemma \ref{trim-bdc-cap-lemma}, the same statement holds.\footnote{In fact, the statement of Lemma \ref{trim-bdc-cap-lemma} is that the information rates coincide; but by looking at the proof it's clear that we prove the stronger statement that each fixed process $\{X_j\}_{j\geq 1}$ satisfies $\lim_{n\to\infty}\frac{1}{n}I(X;Y) = \lim_{n\to\infty}\frac{1}{n}I(X;Y')$ for $Y=\RC_\DD X$ and $Y'=\TRC_\DD X.$ Hence the information rate of the $\TRC_\DD$ is again attained by the stationary ergodic processes.} Now let $\{X_j\}_{j\geq 0}$ be a stationary ergodic process such that 
\[
\II = \lim_{n\to\infty}\frac{1}{n}I(X_1^n;Y^n),
\]
where $Y^n = \Ch X_1^n$. Let $P := \pr(X_1=1),$ and note that by stationarity we have $P \in (0,1)$ or else $\{X_j\}$ is a trivial process, hence does not achieve the information rate. By Birkhoff's pointwise ergodic theorem, almost surely 
\[
P = \lim_{t\to \infty}\frac{1}{t}\sum_{j=1}^t \1\{X_j=1\} =  \lim_{t\to \infty}\frac{1}{t} w(X_1^t),
\]
so in particular setting $t=\zeta n,$ for any $\delta>0,$ with probability $p_n\to 1,$ we have $(P-\delta)\zeta n\leq  w(X_1^{\zeta n}) \leq (P+\delta)\zeta n.$ Picking $\delta,\gamma$ small enough, we can ensure that $ \gamma\zeta n\leq  w(X_1^{\zeta n}) \leq (1-\gamma)\zeta n$ with probability $p_n$. Now to extend to the substrings, we first look at disjoint consecutive blocks: by stationarity we have $X_{i\zeta n + 1}^{(i+1)\zeta n} \eqdist X_1^{\zeta n}$ for all $i\in [1/\zeta],$ so by a union bound over a constant $1/\zeta$ number of substrings, with probability $\til{p}_n\to 1$ we have $ \gamma \zeta n \leq w(x_{i\zeta n + 1}^{(i+1)\zeta n}) \leq (1-\gamma)\zeta n$ simultaneously for all $i\in [1/\zeta]$. But we note that each substring $x_i^{i+ 3\zeta n}$ fully contains at least one block substring of the form $x_{j\zeta n + 1}^{(j+1)\zeta n};$ hence $\frac{\gamma}{3} 3\zeta n\leq w(x_i^{i+ 3\zeta n})\leq (1-\frac{\gamma}{3})3\zeta n$ for all $i \in [n-\zeta n]$ simultaneously with probability $\til{p}_n.$ Then re-setting $\til{\zeta} = 3\zeta$ and $\til{\gamma} = \gamma/3$ yields the property of the lemma with probability $\til{p}_n.$ Finally we note that we may extract a family of codes $\til{\CC}_n$ of rate $R \geq \lim_{n\to\infty}\frac{1}{n}I(X_1^n;Y^n) - \eps$ from $\{X_j\}$ via sampling, as in the standard proof of Shannon's theorem, and as extended by Dobrushin \cite{generalized-shannons-thm} (see also \cite{cover-thomas}, Theorem 7.7.1, for the argument in the memoryless case, which analogous). Since with high probability this process satisfies the required property, we may discard any codewords from $\til{\CC}_n$ that don't satisfy it to obtain our desired family of codes $\CC_n$ of the same rate. This concludes the proof.
\end{proof}

With these two lemmas, we are ready to prove the correctness of the construction from Section \ref{construction}. 

\begin{proof}[Proof of Theorem \ref{main-thm}]
It remains to show that the decoding algorithm $\Dec$ described in Section \ref{construction} succeeds with high probability, for properly chosen inner code blocklength $m$. There are four potential sources of error in the decoding; the first three pertain to identifying the buffers of zeros, and the fourth to the inner code failures.
\begin{enumerate}
    \item For a given buffer $0^b$ at the sender, less than $\frac{\mu}{2} b =\frac{\mu}{2}\eta m$ zeros survive, so the buffer is not identified during decoding.
    \item All ones in a given inner codeword are deleted, so that two adjacent buffers are incorrectly merged during decoding.
    \item A substring of a received inner word longer than $\frac{\mu}{2}\eta m$ arrives with all zeros (all ones get deleted by the $\BDC$), so that a spurious buffer appears.
    \item For a given correctly identified received inner word, the inner code decoding fails.
\end{enumerate}
We note that error (1) results in the merging of two inner codewords in the decoding process. Since this merged codeword is not the output of the $\TRC_\DD$ with an inner codeword as input, we have no guarantee of a small probability of decoding error of the inner code. We consider the worst-case scenario: we assume the inner code always fails in decoding this string. At the outer code level, this then results in the deletion of two letters, and the insertion of another in the same location, i.e. an edit distance of 3. For error (2), we clearly have a deletion at the outer code level, i.e. an edit distance of 1. For error (3), we again cannot assume the inner code will succeed in decoding these two halves of a received codeword, and hence we assume the worst case scenario: one deletion and two insertions, i.e. edit distance 3. Finally for error (4) we clearly have a substitution at the outer code level, (which is equivalent to a deletion followed by an insertion), i.e. edit distance 2.

Now suppose that each of these errors occurs at most $k\delta / 9$ times. Then the total edit distance is at most $k\delta/9 \cdot (3+1+3+2) = k\delta$. Hence to conclude the proof we must show that each error occurs more than $k\delta/9$ times with vanishing probability, for properly chosen $m$. This then implies that our outer code has to correct from an edit distance more than $k\delta$ with vanishing probability, i.e. the outer code succeeds with probability approaching $1$ as $k\to\infty$ (hence $n\to\infty$).

Error (1) occurs with probability $\pr(|\RC_\DD 0^{m\eta}| < \frac{\mu}{2}\eta m) = O(m^{-1})$ by Chebyshev's inequality. Hence for any $\eta,$ taking $m$ to be a large enough constant we can make this probability less than $\delta/10.$ Since this error can happen independently for each of the $k-1$ buffers, the number of buffers that suffer from error (1) is given by a $Binomial(k-1, p)$ random variable, where $p\leq \delta/10.$ Again by a standard concentration bound, the probability that there are more than $k\delta/9$ errors vanishes as $k\to\infty,$ as desired.

By Proposition~\ref{zeros-distribution-lemma}, each inner codeword has at least $\gamma m$ ones, for some $\gamma>0$ independent of $m.$ Hence error (2) occurs with probability $d^{\gamma m}$. As before, we take $m$ large enough such that $d^{\gamma m}<\delta/10,$ and then by concentration of measure as $k\to \infty,$ the probability of having more than $k\delta/9$ errors vanishes.

We now consider error (3). Consider the event that we receive a string $s$ of all zeros with $|s|\geq \frac{\mu}{2}\eta m$ as part of the output of the channel for a codeword $x\in \CC_{in}$ as input. This implies one of two things: ($a$) that some substring $\til{s}$ of length $>\frac{1}{4}\eta m$ of the input had all its one bits deleted and gave rise to $s$, or ($b$) that some substring $\til{s}$ of length $\leq \frac{1}{4}\eta m$ at the input gave rise to \emph{any} string of length $\geq \frac{1}{2}\mu \eta m$ at the output. We analyze each case separately. For ($a$), by Proposition~\ref{zeros-distribution-lemma}, choosing $\zeta = \frac{1}{2}\mu\eta$, we must have $w(\til{s})\geq \gamma \zeta m.$ But then the probability that such a substring $s$, say at the beginning of the received word, exists in the first place is less than $d^{\gamma \zeta m},$ and by a union bound the probability that any such substring exists is less than $O(1) \cdot d^{\gamma \zeta m}$ (since the received word has length $\leq m,$ and hence we can discretize it into $O(1)$ substrings of size $\geq \frac{\mu}{2}\eta m$) which can be made less than $\delta/20$ for $m$ chosen large enough. For ($b$), note that a substring of length $\leq \frac{1}{4}\eta m$ at the input giving length $\geq \frac{1}{2}\mu\eta m$ at the output implies that there's a substring of length \emph{exactly} $\frac{1}{4}\eta m$ giving an output of length $\geq \frac{1}{2}\mu\eta m$ (since a bigger input can only give a bigger output). But if $Z = X_1+\dots+X_t$, for $t = \frac{1}{4}\eta m$ and $X_j\sim \DD,$ the probability of this happening is 
\begin{align*}
    \pr(Z\geq \frac{1}{2}\mu \eta m) &\leq \pr(|Z - \E Z| \geq \frac{1}{4}\mu \eta m) \\
    &\leq \frac{t\sigma^2}{(\frac{1}{4}\mu \eta m)^2} \\
    &= \frac{\sigma^2}{\frac{1}{4}\mu \eta m} \\
    &= O(m^{-1})
\end{align*}
by Chebyshev's inequality. Again by a union bound over $O(1)$ possible initial substrings $s,$ making $m$ large enough we can make this $\leq \delta/20.$ Hence, the probability of error (3) is $\leq \delta/20 + \delta/20 = \delta/10,$ and by concentration of measure, more than $\delta/9$ errors occur with vanishing probability.

Error (4) occurs with probability going to zero as $m$ grows by assumption of the inner code being sound for the $\TRC_\DD$. For $m$ large enough this probability is less than $\delta/10,$ and by the same argument as above as $k\to\infty$ we get $k\delta/9$ errors with vanishing probability. This concludes the proof.

Finally, the error probability is $e^{-\Omega(n)}$ because, as was mentioned, the frequency of each error type (1-4) is a $Binomial(t, p)$ random variable with $t=k-1$ or $t=k$ and $p\leq \delta/10.$ Hence by a standard Chernoff bound, and union bounding over errors (1-4), we obtain the desired $e^{-\Omega(n)}$ probability of edit distance greater than $k\delta/9$, i.e. a $e^{-\Omega(n)}$ probability of failure. This concludes the proof.

\end{proof}

\section{Extensions to More General Channels}\label{extensions}\label{sec:general}

In this section, we explore how our work generalizes beyond square-integrable repeat channels, and sketch how it can be extended to \emph{biased square-integrable Dobrushin channels}, which we introduce below.

A natural extension of the repeat channel model is given by the class of channels admitted by the conditions of the Generalized Shannon's Theorem \ref{shannon-thm}. These are the \emph{Dobrushin channels}, which appear in the work of Pfister and Tal \cite{polar2}.
\begin{defn}
Fix two probability distributions $\DD_0$ and $\DD_1$ over $\{0,1\}^*.$ A \emph{$(\DD_0,\DD_1)$-Dobrushin channel} $\DC_{\DD_0,\DD_1}$ (we often just write $\DC$) acts independently on each input bit as $(\DC 0) \sim \DD_0$ and $(\DC 1)\sim \DD_1,$ and concatenates the outputs. We say that $\DC$ is \emph{square-integrable} if for $Y_i\sim \DD_i$ we have $\E|Y_i|^2<\infty$ for $i=0,1.$
\end{defn}
We note that repeat channels correspond to the special case of Dobrushin channels where $\DD_0$ is supported on all-zero strings, $\DD_1$ is supported on all-one strings, and their induced distributions of string lengths coincide. Moreover in that case, both notions of square-integrability agree. We also note that many natural notions of insertion, substitution, and deletion errors can be described as Dobrushin channels.

Can we extend our construction from Section~\ref{sec:main-result} to general square-integrable Dobrushin channels? A moment of thought reveals that many Dobrushin channels have a capacity of zero, and hence our construction fails to make sense (note for example that in our proof of Proposition~\ref{zeros-distribution-lemma}, we assume that the information rate of the channel is non-zero). Even if the capacity is non-zero, for general distributions $\DD_0,\DD_1$, it's unclear how to distinguish the buffers of zeros from the inner codewords at the receiver. A simple condition on the distributions $\DD_0,\DD_1$ was considered by \cite{polar2} that allows one to reliably identify the buffers at the receiver; we call these the \emph{biased} Dobrushin channels.
\begin{defn}\label{biased-dobrushin}
A Dobrushin channel $\DC_{\DD_0,\DD_1}$ is \emph{biased} if for $Y_0\sim \DD_0,Y_1\sim \DD_1$ we have $\E |Y_0| =\E|Y_1|<\infty,$ and $\E [w(Y_0)]<\frac{1}{2}\E|Y_0|$ and $\E [w(Y_1)]>\frac{1}{2}\E|Y_1|$, where $w:\{0,1\}^*\to \N$ is the Hamming weight.
\end{defn}

Pfister and Tal \cite{polar2} proved that for a restricted class of Dobrushin channels, allowing only for deletions, substitutions, and one-bit insertions, there is a polar codes construction that can achieve the capacity. As they mention, their construction can also be generalized to the biased Dobrushin channels, if some mild regularity conditions are imposed on the $\DD_i.$ Their encoding algorithm runs in linear time, and their decoding algorithm runs in time $O(n^{1+3\nu})$ with failure probability $e^{-\Omega(n^{\nu'} )},$ for any $0<\nu'<\nu<\frac{1}{3}.$ We now give a proof sketch of how our construction may be extended to produce near capacity-achieving codes for biased square-integrable Dobrushin channels, with a slightly improved probability of failure $e^{-\Omega(n)}$ and a slightly improved decoder runtime of $O(n\poly(\log n)).$

\begin{thm}\label{extended-main-thm}
    Fix a biased square-integrable Dobrushin channel $\DC.$ For every $\eps>0$, there exists a sound code $\CC$ with rate $R$ for the $\DC$ with $R\geq Cap(\DC) - \eps$ and linear and quasi-linear time encoding and decoding algorithms, respectively. Moreover, the decoder has probability of failure $e^{-\Omega(n)}.$
\end{thm}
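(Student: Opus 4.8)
The plan is to mirror the proof of Theorem~\ref{main-thm} essentially verbatim, replacing each appeal to a property special to repeat channels by its biased‑Dobrushin analogue. First I would record the two structural ingredients. The analogue of Lemma~\ref{trim-bdc-cap-lemma} is that trimming the output of a biased Dobrushin channel does not change its information rate; since a biased $\DC$ still acts independently on each bit with finite expected output length, the Generalized Shannon's Theorem~\ref{shannon-thm} applies, and the same entropy‑counting argument (conditioning on the input indices $\widetilde L,\widetilde R$ that map to the trimmed boundary, and bounding $H(\widetilde L,\widetilde R)$ and $H(L,R,|Y|)$ by $O(\log n)$) goes through unchanged once we truncate $\DD_0,\DD_1$ to linear‑in‑$n$ support via Chebyshev, exactly as before. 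Here one must define $\TRIM$ for general outputs — deleting maximal prefixes/suffixes that ``look like'' typical $\DD_0$‑output — but since the buffers will be all‑zero strings it suffices to trim runs that are flagged as buffers; I would fold this into the decoder rather than into a separate channel.

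Next, the analogue of Proposition~\ref{zeros-distribution-lemma}: there is an information‑rate‑achieving code for $\DC$ all of whose sufficiently long substrings have Hamming weight within a constant factor of $\zeta n/2$. The proof is identical — take the stationary ergodic capacity‑achieving input process $\{X_j\}$, note $P=\pr(X_1=1)\in(0,1)$ (else the information rate is $0$, which is where biasedness and nontriviality of $Cap(\DC)$ are used), apply Birkhoff to get $w(X_1^{\zeta n})\approx P\zeta n$, union bound over $O(1/\zeta)$ blocks, pass to overlapping substrings, and sample a code from the process discarding bad codewords. The point is that the \emph{output} bias — $\E[w(Y_1)]>\tfrac12\E|Y_1|$ and $\E[w(Y_0)]<\tfrac12\E|Y_0|$ — is what lets us later identify buffers, while the \emph{input} balance of this code is what controls how many spurious buffers the channel can create.

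Then I would run the construction of Section~\ref{construction} with the same concatenation: Haeupler–Shahrasbi outer code over alphabet $2^m$, inner code the above ``locally balanced'' code for the trimmed $\DC$, and all‑zero buffers $0^b$, $b=\eta m$. The decoder declares a buffer wherever it sees a maximal run of zeros of length at least $\theta m$ for a suitable constant $\theta$ (chosen using $\E|Y_0|,\E|Y_1|$ and the bias gap), removes buffers, inner‑decodes, outer‑decodes. The four error events are the same: (1) a genuine buffer $0^{\eta m}$ produces too few zeros — controlled by Chebyshev on $\sum_{i=1}^{\eta m}|Y_0^{(i)}|$ and $\E[w(Y_0)]<\tfrac12\E|Y_0|$, so a typical buffer yields $\gtrsim(\E|Y_0|-\E[w(Y_0)])\eta m$ zeros in a run; (2) an inner codeword is entirely "zeroed out" so two buffers merge — here I use that the inner codeword has $\gtrsim\gamma m$ ones, each of which independently produces a string with a one with probability bounded below by a constant depending on $\DD_1$, so this has probability $e^{-\Omega(m)}$; (3) a spurious long zero‑run appears inside a received inner word — split into a substring of the input of length $\le\tfrac14\eta m$ blowing up (Chebyshev on the square‑integrable length distribution, $O(1/m)$) versus a substring of length $>\tfrac14\eta m$ having all its ones' outputs come out zero‑free‑of‑ones (probability $\le c^{\gamma\zeta m}$ using the balance of the inner code and a constant $c<1$ from $\DD_1$); (4) inner decoding failure, $o_m(1)$ by soundness. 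As in Theorem~\ref{main-thm}, each bad event at a given location has probability $\le\delta/10$ for $m$ a large constant, the per‑location events across the $\approx k$ blocks are independent, so a Chernoff bound gives $\le k\delta/9$ of each type except with probability $e^{-\Omega(k)}=e^{-\Omega(n)}$; weighting edit costs $3{+}1{+}3{+}2$ as before yields total edit distance $\le k\delta$, which the outer code corrects. Runtimes are unchanged: $O(1)$ inner encode/decode, linear outer encode, quasi‑linear outer decode.

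The main obstacle is error type (3) combined with buffer identification: for a general $\DD_1$ one needs a quantitative statement that a not‑too‑short input substring of the locally balanced inner code cannot, with non‑negligible probability, produce an output run of $\ge\theta m$ consecutive zeros. This requires (a) that $\pr(Y_1 \text{ contains a } 1)$ is bounded below by a constant — which is exactly the content of $\E[w(Y_1)]>0$, guaranteed by biasedness — and (b) that the threshold $\theta$ can be chosen to simultaneously lie below the typical zero‑output of a real buffer and above the largest zero‑run a length‑$\tfrac14\eta m$ input substring can plausibly generate; pinning down $\theta$ is where the ``mild regularity conditions on $\DD_i$'' alluded to for the Pfister–Tal argument re‑enter, and I would state them explicitly (e.g. $\DD_0,\DD_1$ have sub‑exponential tails, or at least the bias gap $\E|Y_i|-2\E[w(Y_i)]$ is bounded away from $0$) and carry the constants through. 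Everything else is a routine transcription of the repeat‑channel proof.
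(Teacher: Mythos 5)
There is a genuine gap, and it sits exactly at the point where the biased Dobrushin setting differs from repeat channels: your buffer detector. You propose to ``declare a buffer wherever a maximal run of zeros of length at least $\theta m$'' appears, and in error event (1) you assert that a sent buffer $0^{\eta m}$ yields $\gtrsim(\E|Y_0|-\E[w(Y_0)])\eta m$ zeros \emph{in a run}. For a general biased $\DC$ this is false: $\DD_0$ may output strings containing ones (e.g.\ $\DD_0$ uniform on $\{0,01\}$ satisfies $\E[w(Y_0)]<\tfrac12\E|Y_0|$), so the received image of $0^{\eta m}$ is a string with a constant fraction $f$ of ones scattered throughout, whose longest all-zero run is $O(\log m)$ with high probability. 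Your detector then finds no buffers at all. The bias condition only guarantees a \emph{density} gap, not a run-length gap, and the paper's construction uses exactly that: a sliding window of length $\nu\eta m$ is flagged as buffer iff its fraction of ones is below $f+\kappa$. The matching correctness statements are (i) every window of a received buffer has ones-density $\le f+\kappa$ whp, and (ii) every window of a received inner codeword has ones-density $> f+\kappa$ whp; (ii) is the nontrivial one (Lemma~\ref{app-lem-2}), proved by combining the local input balance of Proposition~\ref{zeros-distribution-lemma} (at least a $\gamma$ fraction of input ones in the relevant input window) with the output bias $\E[w(Y_1)]>\tfrac12\E|Y_1|>f\,\E|Y_0|$, plus concentration. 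Your error event (3), phrased as ``all ones' outputs come out free of ones,'' likewise analyzes the wrong event once zeros can emit ones.

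A second, smaller gap: you claim the trimming argument of Lemma~\ref{trim-bdc-cap-lemma} ``goes through unchanged'' because ``the buffers will be all-zero strings.'' The trimmed bits are bits of the \emph{received} word adjacent to a detected buffer, i.e.\ outputs of the channel, and they are not determined by their length and position, so the step $H(Y_1^L,Y_R^n)\le H(L,R,|Y|)=O(\log n)$ breaks. The paper instead bounds the trim lengths by $\nu\eta m$, gets $H(Y_1^L,Y_R^m)\le 2m\nu\eta$, concludes the trimmed channel has information rate at least $Cap(\DC)-2\nu\eta$ (Lemma~\ref{trim-dobrushin-cap-lemma}), and recovers the full capacity only by sending $\nu\to 0$ along with the other parameters. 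Your Proposition~\ref{zeros-distribution-lemma} analogue and the Chernoff/edit-distance bookkeeping are fine and match the paper, but without the density-based detector and the quantitative trimming loss the proof does not close.
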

We now outline how to extend our arguments from Section~\ref{sec:main-result} to this more general setting. Throughout, we will refer to the expected fraction of ones in the output of a long sequence of zeros passed through the channel, which we denote by the parameter
\[
f = \frac{\E [w(Y_0)]}{\E|Y_0|},
\]
where $Y_0$ is as in Definition~\ref{biased-dobrushin}; by that definition we have $f<\frac{1}{2}.$ We follow a somewhat different order to the one of Section~\ref{sec:main-result}. First, we outline the necessary modifications to our construction from Section~\ref{construction}. Second, we prove the low probability of decoding failure (the analog of which for repeat channels was proved in the proof of Theorem~\ref{main-thm} in Section~\ref{main-proof}). Finally, we return to the capacity of the trimming version of the channel, needed to obtain our rate guarantees.

\paragraph{Construction.} We essentially leave our construction from Section~\ref{construction} unchanged; the only difference to address is the identification of the zero buffers by the decoder. We declare any contiguous block of $\nu \eta m$ bits with a fraction of ones less than $f+\kappa$ to be part of a buffer, where $\nu$ and $\kappa$ are parameters to be chosen later. To implement this efficiently, one can iterate a running window of size $\nu \eta m$ through the received codeword, and as soon as the fraction of ones goes below $f +\kappa,$ we say we have hit a buffer; as soon as that fraction goes above $f +\kappa,$ we declare that the current buffer has ended. As in Section~\ref{construction}, we remove these buffers, and process the inner codewords in between as before.

\paragraph{Correctness.} The sources of error are similar to those of the proof of Theorem~\ref{main-thm}. Only two additional claims are needed: (1) the probability that any substring of length $\nu\eta m$ of a received \emph{buffer} has a frequency of ones \emph{higher} than $f+\kappa$ vanishes as $m\to\infty$, and (2) the probability that any substring of length $\nu\eta m$ of a received \emph{codeword} has a frequency of ones \emph{lower} than $f+\kappa$ vanishes as $m\to\infty.$ As we will see, the intermediary cases where such a substring falls half in a buffer and half in a codeword will not be necessary to analyze. In the Appendix \ref{app-lem-2} we sketch a proof of claim (2);  (1) follows by an easier version of the same argument, with the caveat that we must take $\nu$ small to ensure that the buffer is identified with high probability. 

\paragraph{Trimming Channel Information Rate.} We need an analog of Lemma~\ref{trim-bdc-cap-lemma} for this more general setting. Given the discussion of the previous paragraph, the appropriate channel is one which trims a random number of bits off each end of the codeword bounded by $\nu \eta m.$ Along these lines we define the \emph{trimming Dobrushin channels}.
\begin{defn}
Given an integer $n$ and two probability distributions $\TT_\ell,\TT_r$ (allowed to depend on $n$) over the natural numbers, let $\TRIM_{\TT_\ell,\TT_r}$ be the channel which acts on $x\in \{0,1\}^n$ as $\TRIM_{\TT_\ell,\TT_r} x = x_{t_\ell}^{n-t_r}$ where $t_\ell \sim \TT_\ell,t_r\sim \TT_r$ are independent (with $\TRIM_\TT x$ the empty string if $n-t_2<t_1$). Given a Dobrushin channel $\DC = \DC_{\DD_0,\DD_1},$ we define the \emph{$(\TT_\ell,\TT_r)$-trimming $(\DD_0,\DD_1)$-Dobrushin channel} by the concatenation $\TDC_{\DD_0,\DD_1,\TT_\ell,\TT_r}=\TDC_{\TT_\ell,\TT_r}=\TRIM_{\TT_\ell,\TT_r}\circ \DC.$
\end{defn}
\begin{lem}\label{trim-dobrushin-cap-lemma}
Let $\DC$ be a square-integrable Dobrushin channel, and let $\TDC_{\TT_\ell,\TT_r}$ be the trimming version with $\TT_\ell,\TT_r$ each bounded by $\nu \eta m$ for inputs of length $m.$ Then the information rate for the $\TDC_{\TT_\ell,\TT_r}$ is at least $Cap(\DC) - 2\nu\eta.$
\end{lem}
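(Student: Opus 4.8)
The plan is to mimic the proof of Lemma~\ref{trim-bdc-cap-lemma}, but since the trimming here is by an \emph{externally prescribed} random amount $t_\ell,t_r$ (rather than a data-dependent amount determined by the output), the argument is actually simpler: there is no need to introduce auxiliary random variables like $\til L,\til R$ tracking input positions, because the trimmed portion of the output is specified by $(t_\ell,t_r)$ alone together with the (short) prefix/suffix bits of the output. First I would fix a stationary ergodic input process $\{X_j\}$ achieving the information rate of $\DC$ and, for each $n=m$, let $X=X_1^m$, $Y=\DC X$, and $Y' = \TDC_{\TT_\ell,\TT_r} X = Y_{t_\ell+1}^{|Y|-t_r}$, where $(t_\ell,t_r)$ are drawn independently of everything from $\TT_\ell,\TT_r$. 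The goal is to show $\tfrac1m I(X;Y') \geq \tfrac1m I(X;Y) - 2\nu\eta - o(1)$, which by the Generalized Shannon's Theorem (Theorem~\ref{shannon-thm}, applied to the square-integrable $\DC$) and the fact that the rate is attained by such a process gives the claimed bound $Cap(\DC) - 2\nu\eta$ on the information rate of $\TDC_{\TT_\ell,\TT_r}$.

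The key chain of inequalities I would carry out: condition on $(t_\ell,t_r)$ throughout (this costs only $H(t_\ell,t_r) = O(\log m) = o(m)$ in mutual information, exactly as in the proof of Lemma~\ref{trim-bdc-cap-lemma}, since $(t_\ell,t_r)$ is supported on $[\nu\eta m]^2$). Writing $Y = AY'B$ where $A$ is the deleted prefix (a string of length $t_\ell$ in the output coordinates, itself a function of $Y$) and $B$ is the deleted suffix (length $t_r$), we have
\[
I(X;Y \mid t_\ell, t_r) \leq I(X; Y' \mid t_\ell, t_r) + H(A, B \mid t_\ell, t_r) \leq I(X;Y'\mid t_\ell,t_r) + H(A) + H(B).
\]
Now $A$ is a binary string of length $t_\ell \leq \nu\eta m$, so $H(A) \leq \E[t_\ell] \leq \nu\eta m$, and similarly $H(B) \leq \nu\eta m$; hence $I(X;Y\mid t_\ell,t_r) \leq I(X;Y'\mid t_\ell,t_r) + 2\nu\eta m$. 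Combining with $|I(X;Y) - I(X;Y\mid t_\ell,t_r)| = o(m)$ and $|I(X;Y') - I(X;Y'\mid t_\ell,t_r)| = o(m)$ (both from the $H(t_\ell,t_r)=o(m)$ bound, using the same $H(X\mid\cdot)$ vs $H(X)$ manipulation via the chain rule as in Lemma~\ref{trim-bdc-cap-lemma}), dividing by $m$ and taking $m\to\infty$ yields the result.

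The only mild subtlety — and the step I expect to require the most care — is the bookkeeping around what exactly $A$ and $B$ are: the trimming is by $t_\ell$ \emph{output} bits off the front, so $A = Y_1^{t_\ell}$ is genuinely a deterministic function of $(Y,t_\ell)$, and $Y = A Y' B$ reconstructs $Y$ from $(A, Y', B, t_\ell, t_r)$; one must make sure the decomposition $I(X;AY'B\mid t_\ell,t_r) = I(X;Y'\mid t_\ell,t_r) + I(X;A,B\mid Y',t_\ell,t_r) \le I(X;Y'\mid t_\ell,t_r)+H(A,B\mid t_\ell,t_r)$ is applied with $A,B$ on the correct side (this is the standard ``adding a random variable changes $I$ by at most its entropy'' estimate, used twice). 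Unlike Lemma~\ref{trim-bdc-cap-lemma}, no lower bound of the form $I(X;Y\mid\cdot)\ge I(X;Y'\mid\cdot)$ is needed for the statement, since we only claim an inequality (information rate \emph{at least} $Cap(\DC)-2\nu\eta$), so the argument is one-directional and correspondingly shorter; square-integrability of $\DC$ enters only to guarantee $\E|Y|<\infty$ so that Theorem~\ref{shannon-thm} applies and the information rate of $\DC$ equals $Cap(\DC)$.
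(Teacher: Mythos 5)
Your proposal is correct and follows essentially the same route as the paper's proof: condition on the trim amounts at an $o(m)$ cost in mutual information, and bound the entropy of the two trimmed output segments by $2\nu\eta m$ since each is a binary string of length at most $\nu\eta m$. Your observation that one can condition directly on $(t_\ell,t_r)$ without the input-index bookkeeping of Lemma~\ref{trim-bdc-cap-lemma} is a valid (cosmetic) simplification, and the one-directional nature of the claim is exactly how the paper's modified Claim~(2') is stated.
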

\begin{proof}[Proof Sketch]
See the Appendix \ref{app:trim-dobrushin-cap-lemma}.
\end{proof}

\bigskip

We finally conclude by verifying the promised rate, runtime, and probability of decoding failure guarantees.
\begin{proof}[Proof Sketch of Theorem \ref{extended-main-thm}]
The rate is as in Section~\ref{construction}, except we now have $n=km/(Cap(\DC) - \nu\eta - \psi(\eps,\delta, \nu, k,m))$. Again, if we take $\eps,\delta,\eta,\nu\to 0$ and $k,m\to\infty$, the rate converges to $Cap(\DC),$ and hence taking $\eps,\delta,\eta,\nu$ small enough and $m$ large enough we can make the rate get arbitrarily close to $Cap(\DC).$

As for the runtime, the encoding is identical to the one in Section~\ref{construction} and hence is also $O(n)$. For the decoding, the only modification we have made is the identification of the buffers ---given our discussion above, this is also $O(n),$ and hence the decoding runtime is still quasi-linear in $n,$ as desired.

As for the probability of decoding error, the same argument as in the proof of Theorem~\ref{main-thm} applies: the frequency of errors of each kind are binomial random variables, and hence by a Chernoff bound the probability of decodign failure is $e^{-\Omega(n)}.$
\end{proof}

\section*{Acknowledgements}
We thank Ido Tal for pointing out an error in our description of \cite{polar,polar2} in an earlier version of this manuscript.

\printbibliography
\nocite{*}

\appendix

\section{Omitted Proofs}\label{app}
\begin{lem}\label{hp-equal-channels}
Let $\Ch_1, \Ch_2:\Omega\times \{0,1\}^*\to\{0,1\}^*$ be two channels. Suppose there exists a sequence $p_n\geq0$ with $p_n\to 1$ as $n\to \infty$ such that for every $x\in \{0,1\}^n,$ there is a set $A_x\subset \Omega$ such that, conditioned on $A_x,$ the distributions of $\Ch_1 x$ and $\Ch_2 x$ are the same, and $\pr(A_x)\geq p_n.$ Then the information rates of $\Ch_1$ and $\Ch_2$ are the same.
\end{lem}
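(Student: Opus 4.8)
The plan is to show that $|I(X;Y_1)-I(X;Y_2)| = o(n)$ \emph{uniformly} over all input random variables $X$ supported on $\{0,1\}^n$, where $Y_i = \Ch_i X$; taking suprema over $X$, dividing by $n$, and letting $n\to\infty$ then immediately yields that the two information rates coincide. So fix $n$ and such an $X$.

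The first step is to introduce an auxiliary binary random variable recording whether the ``good event'' occurred: set $E = \1\{\omega \in A_X\}$, a well-defined random variable on the common space $\Omega$ (no measurability subtlety arises, since $\{0,1\}^n$ is finite). By hypothesis $\pr(E=1\mid X=x) = \pr(A_x)\ge p_n$ for every $x$, hence $\pr(E=1)\ge p_n$. The key observation is that, conditioned on $\{E=1\}$, the joint law of $(X,Y_1)$ equals that of $(X,Y_2)$: conditioning further on $X=x$, both $Y_1$ and $Y_2$ have the law of $\Ch_1 x$ (equivalently $\Ch_2 x$) restricted to $A_x$. Consequently $I(X;Y_1\mid E=1) = I(X;Y_2\mid E=1)$.

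The second step is the mutual-information bookkeeping. By the chain rule, $I(X;Y_i,E) = I(X;E)+I(X;Y_i\mid E) = I(X;Y_i)+I(X;E\mid Y_i)$, so $|I(X;Y_i)-I(X;Y_i\mid E)| = |I(X;E)-I(X;E\mid Y_i)| \le H(E)\le 1$ because $E$ is binary. Also, expanding over the value of $E$,
\[
I(X;Y_i\mid E) = \pr(E{=}1)\,I(X;Y_i\mid E{=}1) + \pr(E{=}0)\,I(X;Y_i\mid E{=}0),
\]
and since the $E{=}1$ terms agree for $i=1,2$ while $I(X;Y_i\mid E{=}0)\le n$ trivially (as $X$ is supported on $\{0,1\}^n$), we get $|I(X;Y_1\mid E)-I(X;Y_2\mid E)| \le \pr(E{=}0)\,n \le (1-p_n)n$. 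Combining the two estimates via the triangle inequality gives $|I(X;Y_1)-I(X;Y_2)|\le 2 + (1-p_n)n$, and this bound is uniform in $X$.

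The conclusion is then routine: taking $\sup_X$ on both sides yields $|\sup_X I(X;Y_1)-\sup_X I(X;Y_2)|\le 2+(1-p_n)n$, so dividing by $n$ and letting $n\to\infty$ (using $p_n\to 1$) shows the limits defining the two information rates are equal. I do not anticipate a genuine obstacle; the only things to be careful about are defining $E$ on the shared space $\Omega$ and keeping the conditional-mutual-information accounting straight — in particular, it is precisely because $E$ is binary that the ``correction terms'' $I(X;E)$ and $I(X;E\mid Y_i)$ are $O(1)$ rather than growing with $n$, so that only the $\pr(E{=}0)$-weighted term carries the vanishing factor.
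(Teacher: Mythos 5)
Your proposal is correct and follows essentially the same route as the paper: introduce the indicator $\1_{A_X}$, use the chain rule to show that conditioning on it perturbs $I(X;Y_i)$ by only $O(1)$, and then exploit that the conditional mutual informations agree on the good event while the bad-event contribution is at most $(1-p_n)n = o(n)$. Your version is in fact slightly cleaner in making the bound explicitly uniform over $X$ before taking suprema, but this is a matter of bookkeeping rather than a different argument.
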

\begin{proof}
Letting $Y_1 = \Ch_1 X$ and $Y_2=\Ch_2 X,$ we have by the chain rule 
\[
I(X;Y_i)\leq I(X,\1_{A_X};Y_i) = I(\1_{A_X};Y_i) + I(X;Y_i|\1_{A_X}) \leq I(X;Y_i|\1_{A_X}) + \log 2
\]
for $i=1,2,$ and similarly $I(X,\1_{A_X};Y_i) = I(X;Y_i) + I(\1_{A_X};Y_i|X)$ so 
\[
I(X;Y_i) = I(X;Y_i|\1_{A_X}) + I(\1_{A_X};Y_i) - I(\1_{A_X};Y_i|X)\geq I(X;Y_i|\1_{A_X}) - 2\log 2,
\]
so $|I(X;Y_i) - I(X;Y_i|\1_{A_X})|\leq 2\log 2 = o(n)$, and to show that the information rates are the same, it suffices to prove $\lim_{n\to \infty}\frac{1}{n}I(X;Y_1|\1_{A_X})=\lim_{n\to \infty}\frac{1}{n}I(X;Y_2|\1_{A_X}).$ But we have
\begin{align*}
    \lim_{n\to \infty}\frac{1}{n}I(X;Y_1|\1_{A_X}) &= \lim_{n\to \infty}\frac{1}{n}[\pr(A_X)I(X;Y_1|A_X) + \pr(A_X^c)I(X;Y_1|A_X^c)] \\
    &= \lim_{n\to \infty}\frac{1}{n}\pr(A_X)I(X;Y_1|A_X) &&(*) \\
    &= \lim_{n\to \infty}\frac{1}{n}\pr(A_X)I(X;Y_2|A_X) \\
    &= \lim_{n\to \infty}\frac{1}{n}[\pr(A_X)I(X;Y_2|A_X) + \pr(A_X^c)I(X;Y_2|A_X^c)] &&(*)\\
    &= \lim_{n\to \infty}\frac{1}{n}I(X;Y_2|\1_{A_X}) 
\end{align*}
where $(*)$ is since $0\leq I(X;Y_i|A_X^c)\leq n$ for $i=1,2$ and $\pr(A_X^c)\to 0$, so the term $\pr(A_X^c)I(X;Y_i|A_X^c)$ doesn't contribute to the limit. This concludes the proof.
\end{proof}

\begin{lem}\label{app-lem-2}
Let $c$ be a codeword of length $m$ from the near capacity-achieving code for a biased square-integrable Dobrushin channel $\DC=\DC_{\DD_0,\DD_1}$ of Lemma~\ref{zeros-distribution-lemma}. The probability that any length $\nu \eta m$ substring from $\DC c$ has a fraction of zeros less than $f + \kappa$, where $f = \E[w(Y_0)]/\E|Y_0|$ and $Y_0\sim \DD_0$, vanishes as $m\to\infty,$ for appropriately chosen $\zeta > 0$ in \ref{zeros-distribution-lemma} and for any $\nu,\eta>0$ and all $\kappa>0$ small enough.
\end{lem}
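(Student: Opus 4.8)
The plan is to derive the statement about length-$\nu\eta m$ windows of the \emph{output} $Y=\DC c$ from the local balance of the \emph{input} $c$ guaranteed by Proposition~\ref{zeros-distribution-lemma} (in its version for the Dobrushin channel $\DC$), together with standard concentration; concretely I would show that with probability $1-o(1)$ every length-$\nu\eta m$ window of $Y$ has at least a $(f+\kappa)$-fraction of ones, which is claim (2) from the outline of Section~\ref{sec:general}. Write $L:=\E|Y_0|=\E|Y_1|$ (finite and equal by the bias condition), $f:=\E[w(Y_0)]/L<\tfrac12$, $g:=\E[w(Y_1)]/L>\tfrac12$, and let $Y^{(j)}$ be the (independent) channel output of the bit $c_j$. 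I would choose the balance-window parameter in Proposition~\ref{zeros-distribution-lemma} small enough, say some $\zeta<\nu\eta/L$, so that every window of $c$ of length $\ge\tfrac{\zeta}{2}m$ contains at least a $\gamma$-fraction of ones for a constant $\gamma=\gamma(\zeta)\in(0,\tfrac12)$; this $\zeta$ is the ``appropriately chosen'' one in the statement.

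Two structural high-probability events come first. \emph{(i)} Square-integrability gives $\pr(|Y_i|>m^{2/3})=O(m^{-4/3})$, so by a union bound over the $m$ bits of $c$, with probability $1-O(m^{-1/3})$ every block satisfies $|Y^{(j)}|\le m^{2/3}$; call this event $E_1$, and truncate the $|Y^{(j)}|$ at $m^{2/3}$, which changes nothing on $E_1$. \emph{(ii)} By a Bernstein-type bound applied to the (now bounded, independent) lengths and a union bound over the $O(m)$ windows of $c$ of length $\lceil\zeta m\rceil$, with probability $1-e^{-m^{\Omega(1)}}$ the channel output of every length-$\lceil\zeta m\rceil$ window of $c$ has total length $<\zeta mL(1+o(1))<\nu\eta m$. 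On the event that both hold, any length-$\nu\eta m$ window $W$ of $Y$ must be a substring of the output of some $c_a^b$ with $b-a+1\ge\zeta m$ (otherwise extend $[a,b]$ to a length-$\lceil\zeta m\rceil$ input window whose output would then have length $\ge|W|=\nu\eta m$, contradicting (ii)); by minimality of $[a,b]$, $W$ contains the entire output of $c_{a+1}^{b-1}$, and on $E_1$ the two boundary blocks contribute at most $2m^{2/3}$ bits, so
\[
\frac{w(W)}{|W|}\;\ge\;\frac{\sum_{j=a+1}^{b-1}w(Y^{(j)})}{\sum_{j=a+1}^{b-1}|Y^{(j)}|+2m^{2/3}}.
\]

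It remains to lower-bound the right-hand side uniformly. Fix an interval $[a',b']$ of length $s\ge\zeta m-2\ge\tfrac{\zeta}{2}m$ and let $N_1'$ and $N_0'=s-N_1'$ be its numbers of ones and zeros in $c$; by the balance property $N_1'/s\ge\gamma$. The numerator $\sum_{j=a'}^{b'}w(Y^{(j)})$ is a sum of $s=\Theta(m)$ independent variables, bounded by $m^{2/3}$ and of bounded variance, with mean $N_1'gL+N_0'fL$, and similarly the denominator sum has mean $sL$; by Bernstein each deviates from its mean by $o(m)$ except with probability $e^{-m^{\Omega(1)}}$. Hence on the good event, using $g>f$,
\[
\frac{w(W)}{|W|}\;\ge\;(1-o(1))\,\frac{N_1'g+N_0'f}{s}\;\ge\;(1-o(1))\bigl(f+\gamma(g-f)\bigr)\;\ge\;f+\kappa
\]
for any $\kappa<\tfrac12\gamma(g-f)$ and $m$ large. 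A union bound over the $O(m^2)$ intervals $[a',b']$, together with events (i) and (ii), gives the claim. (Claim (1) of the outline is the identical computation for the output of a pure buffer $0^{\eta m}$: there $N_1'=0$, so the ratio concentrates around $f<f+\kappa$; one only needs $\nu$ small enough that the buffer's output, of length $\approx\eta mL$, contains an interior window of length $\nu\eta m$.)

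The one genuinely delicate point is event (ii): one must control \emph{uniformly} over all $\Theta(m)$ windows of $Y$ the input interval that produced a given output window, and in particular rule out that a short input interval expands---through atypically long $Y_i$'s---into a long output window. This is exactly where square-integrability (making each single output block $o(\nu\eta m)$) and the freedom to shrink $\zeta$ are both used; everything else is routine use of Bernstein's inequality and the union bound.
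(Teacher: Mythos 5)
Your proposal is correct and follows essentially the same route as the paper's proof sketch: map each length-$\nu\eta m$ output window back to an input interval of length $\Theta(m)$, invoke the local balance of the inner codeword from Proposition~\ref{zeros-distribution-lemma} to lower-bound the expected fraction of ones by $f+\gamma(g-f)>f+\kappa$, and finish by concentration. The only differences are bookkeeping ones in your favor: you make explicit the truncation of block lengths and the uniform (Bernstein plus union bound over input intervals) control of the window-to-interval correspondence and boundary blocks, which the paper's sketch dispatches with ``these leftover bits will be $o(m)$'' and ``by concentration of measure.''
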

\begin{proof}[Proof Sketch]
We let $Y^1,Y^2,\dots,Y^t$ be the sequence of all outputs of the $\DC$ on individual bits of $c$ that are fully contained in the first $\nu\eta m$ bits of $\DC c.$ We note then that the window of the first $\nu\eta m$ bits in $\DC c$ consists of the concatenation $Y^1Y^2\dots Y^t$, followed by potentially a few bits from the output of the $(t+1)$st bit in $c$, which did not fall entirely within the first $\nu\eta m$ bits; if we were considering a window of bits in the middle of the codeword $c,$ we would have such ``leftover bits'' on either side of the concatenation $Y^1\dots Y^t.$ It's easy to see that in either case these leftover bits will be $o(m)$ in number with high probability, and so we will ignore them for the rest of the derivation, by assuming that we have exactly $|Y^1\dots Y^t|=\nu\eta m.$ Moreover, by a union bound over a constant $1/\nu\eta$ number of blocks (and putting together blocks to form all substrings as in the argument of Lemma~\ref{zeros-distribution-lemma}), proving the claim for the $Y^1,Y^2,\dots,Y^t$ at the beginning of the received codeword suffices. With these assumption, we want to show that
\begin{equation}\label{lemA2:1}
    \pr\left(\frac{\sum_{j=1}^t w(Y^j)}{\sum_{j=1}^t |Y_j|} < f+\kappa\right) = 
\pr\left((f+\kappa)\sum_{j=1}^t |Y_j| - \sum_{j=1}^t w(Y^j) > 0\right)
\end{equation}
vanishes as $m\to\infty.$ But, noting that $t$ is a random quantity, we have
\begin{align*}
    \E\left[ (f+\kappa)\sum_{j=1}^t |Y_j| - \sum_{j=1}^t w(Y^j)\right]
    &= \E\left[\E\left[ (f+\kappa)\sum_{j=1}^t |Y_j| - \sum_{j=1}^t w(Y^j) \;\Bigg| \;t\;\right]\right] \\
    &= (f+\kappa)(\E t)\E|Y_0| - \E \left[\sum_{j=1}^t \E[w(Y^j)]\right].
\end{align*}
Now note that by Lemma~\ref{zeros-distribution-lemma}, we can ensure that each substring of length $\geq \nu \eta m /(2\E t)$ from $c$ will have at least $\gamma \nu \eta m /(2\E [t]\E |Y_0|)$ ones. But then so long as $t\geq \frac{1}{2}\E t$ (which clearly happens with probability $\to 1$ as $m\to\infty$) we will have
\begin{align*}
    (f+\kappa)(\E t)\E|Y_0| - \E \left[\sum_{j=1}^t \E[w(Y^j)]\right] &\leq (f+\kappa)(\E t)\E|Y_0| - \left(\gamma(\E t) \E[w(Y_1)]  + (1-\gamma) (\E t)  \E[w(Y_0)] \right) \\
    &\leq (\E t)\E|Y_0| (f+\kappa - (\gamma/2 + (1-\gamma) f)),
\end{align*}
and since $f<\frac{1}{2}$ and $\gamma>0,$ choosing $\kappa$ small enough makes this equal to $-C\E t$ for $C>0$ a constant, and then clearly $-C\E t = -\Omega(m).$ Hence by concentration of measure (using that variances are $o(m^2)$ by standard arguments), as $m\to \infty$, \eqref{lemA2:1} vanishes, as desired.
\end{proof}

\begin{lem}[Lemma~\ref{trim-dobrushin-cap-lemma} in the main text.] \label{app:trim-dobrushin-cap-lemma}
Let $\DC$ be a square-integrable Dobrushin channel, and let $\TDC_{\TT_\ell,\TT_r}$ be the trimming version with $\TT_\ell,\TT_r$ each bounded by $\nu \eta m$ for inputs of length $m.$ Then the information rate for the $\TDC_{\TT_\ell,\TT_r}$ is at least $Cap(\DC) - 2\nu\eta.$
\end{lem}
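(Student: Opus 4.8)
The plan is to mimic the proof of Lemma~\ref{trim-bdc-cap-lemma} almost verbatim, with the simplification that here the trimming is driven by \emph{exogenous} random variables $t_\ell\sim\TT_\ell$ and $t_r\sim\TT_r$ that are independent of everything else, rather than by the all-zero runs at the ends of the output. Fix a block length $m$, let $X$ be supported on $\{0,1\}^m$, $Y=\DC X$, and $Y'=\TDC_{\TT_\ell,\TT_r}X = Y_{t_\ell}^{|Y|-t_r}$. First I would condition on the pair $(t_\ell,t_r)$: since this pair is supported on $[\nu\eta m]^2$, we have $H(t_\ell,t_r)\le 2\log(\nu\eta m) = o(m)$, so by the same chain-rule manipulation used for \eqref{eq:cl1} one gets $|I(X;Y) - I(X;Y\mid t_\ell,t_r)| = o(m)$.

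Next, conditioned on $(t_\ell,t_r)$, write $Y = Y_{\text{pre}}\, Y'\, Y_{\text{suf}}$ where $Y_{\text{pre}}=Y_1^{t_\ell-1}$ and $Y_{\text{suf}}=Y_{|Y|-t_r+1}^{|Y|}$ are the trimmed prefix and suffix. Here, unlike in Lemma~\ref{trim-bdc-cap-lemma}, the trimmed pieces are \emph{not} all-zero strings, so they are not determined by their lengths; but their total length is at most $t_\ell + t_r \le 2\nu\eta m$, so $H(Y_{\text{pre}}, Y_{\text{suf}}\mid t_\ell,t_r) \le 2\nu\eta m + o(m)$ (each trimmed bit contributes at most one bit of entropy, plus lower-order terms from the random lengths). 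Then the same splitting as in the proof of Claim~\ref{cl} gives
\[
I(X;Y\mid t_\ell,t_r) \le I(X;Y'\mid t_\ell,t_r) + H(Y_{\text{pre}},Y_{\text{suf}}\mid t_\ell,t_r) \le I(X;Y'\mid t_\ell,t_r) + 2\nu\eta m + o(m),
\]
and since $(t_\ell,t_r)$ is independent of $X$, a further application of the chain rule removes the conditioning on the right, i.e. $|I(X;Y'\mid t_\ell,t_r) - I(X;Y')| = o(m)$. Combining the three displays, $I(X;Y) \le I(X;Y') + 2\nu\eta m + o(m)$. Taking suprema over $X$, dividing by $m$, and letting $m\to\infty$ (invoking the generalized Shannon's Theorem~\ref{shannon-thm} to identify $\lim_m \frac1m \sup_X I(X;Y)$ with $Cap(\DC)$) yields that the information rate of $\TDC_{\TT_\ell,\TT_r}$ is at least $Cap(\DC) - 2\nu\eta$.

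The main obstacle is the bound $H(Y_{\text{pre}},Y_{\text{suf}}\mid t_\ell,t_r)\le 2\nu\eta m + o(m)$: one must be careful that this counts the \emph{output} bits trimmed, not input bits, and that square-integrability of $\DC$ is what guarantees the relevant lengths concentrate and contribute only $o(m)$ beyond the crude $t_\ell+t_r$ bound. A secondary subtlety is the ``$o(m)$'' error terms: because we are proving a one-sided inequality with an honest $2\nu\eta$ loss (rather than an exact equality), we need only that all error terms are $o(m)$ uniformly in $X$, which holds since each is controlled by entropies of quantities supported on polynomially-sized sets; no matching lower bound on the information rate is required. One should also note that, as in Lemma~\ref{trim-bdc-cap-lemma}, if $\DD_0$ or $\DD_1$ has unbounded support one first truncates to support bounded by $Bm$ and invokes Lemma~\ref{hp-equal-channels}, so that all the lengths appearing above are polynomially bounded and the entropy estimates go through.
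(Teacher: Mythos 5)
Your proposal is correct and follows essentially the same route as the paper's own proof sketch: condition on the trim amounts (a quantity with $o(m)$ entropy), split off the trimmed prefix and suffix via the chain rule, and bound their entropy by $2\nu\eta m$ because they are arbitrary binary strings of total length at most $2\nu\eta m$ --- exactly the paper's bound $H(Y_1^L,Y_R^m)\le \log\bigl((2^{m\nu\eta})^2\bigr)=2m\nu\eta$. The only cosmetic difference is that you condition directly on $(t_\ell,t_r)$ rather than on the corresponding input indices $(\til{L},\til{R})$, which changes nothing since both are supported on polynomially-sized sets.
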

\begin{proof}[Proof Sketch.]
Following the proof of Lemma~\ref{trim-bdc-cap-lemma}, we let $T_\ell\sim \TT_\ell, T_r\sim \TT_r$ be the number of bits that get trimmed off the left and right ends of the output $Y=\DC X$ ($X$ supported on $\{0,1\}^m$), respectively, and let $L=T_\ell, R=m-T_r.$ We prove the same Claim~\ref{cl} with the modified second part 
\[\label{eq:cl2'}
\lim_{m\to\infty}\frac{1}{m}\sup_X I(X;Y|\til{L},\til{R}) \leq \lim_{m\to\infty} \frac{1}{m}\sup_X I(X,Y') + 2\nu\eta, \tag{2'}
\]
where $Y' = \TDC_{\TT_\ell,\TT_r} X.$ The proof of \eqref{eq:cl1} is identical. For \eqref{eq:cl2'}, by the same argument we arrive at the inequality
\begin{align*}
    I(X;Y|\til{L},\til{R}) &\leq I(X; Y_{L+1}^{R-1}|\til{L},\til{R}) + H(Y_1^L, Y_R^m),
\end{align*}
but now all we can claim is $H(Y_1^L, Y_R^m)\leq \log ((2^{m\nu\eta})^2) =2 m\nu\eta.$ The rest of the argument proceeds exactly as before and gives \eqref{eq:cl2'}, and hence the lemma, as desired.
\end{proof}

\end{document}